\newcommand{\blind}{0}
\begin{document}

\def\spacingset#1{\renewcommand{\baselinestretch}%
{#1}\small\normalsize} \spacingset{1}


\if0\blind
{
  \title{\bf Model-based Smoothing with Integrated Wiener Processes and Overlapping Splines}
  \author{Ziang Zhang \hspace{.2cm}\\
    Department of Statistical Sciences, University of Toronto\\
    and \\
    Alex Stringer \\
    Department of Statistics and Actuarial Science, University of Waterloo\\
    and \\
    Patrick Brown \\
    Department of Statistical Sciences, University of Toronto \\
    Centre for Global Health Research, St Michael's Hospital \\
    and \\
    Jamie Stafford \\
    Department of Statistical Sciences, University of Toronto \\
    }
  \maketitle
} \fi

\if1\blind
{
  \bigskip
  \bigskip
  \bigskip
  \begin{center}
    {\LARGE\bf Title}
\end{center}
  \medskip
} \fi

\bigskip
\begin{abstract}
In many applications that involve the inference of an unknown smooth function, the inference of its derivatives will often be just as important as that of the function itself. 
To make joint inferences of the function and its derivatives, a class of Gaussian processes called $p^{\text{th}}$ order Integrated Wiener's Process (IWP), is considered. 
Methods for constructing a finite element (FEM) approximation of an IWP exist but have focused only on the order $p = 2$ case which does not allow appropriate inference for derivatives, and their computational feasibility relies on additional approximation to the FEM itself.
In this article, we propose an alternative FEM approximation, called overlapping splines (O-spline), which pursues computational feasibility directly through the choice of test functions, and mirrors the construction of an IWP as the Ospline results from the multiple integrations of these same test functions.
The O-spline approximation applies for any order $p \in \mathbb{Z}^+$, is computationally efficient and provides consistent inference for all derivatives up to order $p-1$.
It is shown both theoretically, and empirically through simulation, that the O-spline approximation converges to the true IWP as the number of knots increases.
We further provide a unified and interpretable way to define priors for the smoothing parameter based on the notion of predictive standard deviation (PSD), which is invariant to the order $p$ and the placement of the knot.
Finally, we demonstrate the practical use of the O-spline approximation through simulation studies and an analysis of COVID death rates where the inference is carried on both the function and its derivatives where the latter has an important interpretation in terms of the course of the pandemic.
\end{abstract}

\noindent%
{\it Keywords:}  Gaussian Process, Derivatives Inference, Smoothing, Approximate Bayesian Inference, Prior Selection, Hierarchical Model

\spacingset{1.5} 
\section{Introduction}\label{sec:intro}

In many statistical applications that involve an unknown regression function, $g$, inference for the derivatives of $g$ are often as important as inference for $g$ itself \citep{li2020inference, swain2016inferring, de2015smoothed}. 
To make joint inference for $g$ and its derivatives, we consider a model-based smoothing approach that assigns a Gaussian process ($\GP$) model to the function and its derivatives \citep{rasmussen2003gaussian}. 

Due to the close relationship with the traditional smoothing spline \citep{wahbaimproper} the $p^{\text{th}}$ order Integrated Wiener Process, denoted as $\text{IWP}_p(\sigma)$, is a popular choice for the $\GP$ model \citep{rw208}.
The standard deviation parameter $\sigma \geq 0$ controls the covariance function of the $\text{IWP}_p(\sigma)$, and can also be interpreted as a smoothing parameter, where a larger value allows more variability in the inferred $g$ and a value close to zero will force the inferred $g$ to stay in the span of $p^{th}$ order polynomials.   

Given a $p-1$ times continuously differentiable function $g$ with $p \in \mathbb{Z}^+$, assigning $g$ with the $\text{IWP}_p(\sigma)$ immediately assigns its $q$th derivative $g^{(q)}$ with the $\text{IWP}_{p-q}(\sigma)$ for any $q < p$. 
Because of this {\em simultaneous derivative property}, model-based smoothing using $\text{IWP}_{p}(\sigma)$ yields interpretable joint inferences of $g$ and its derivatives. 
However, directly fitting the IWP models to $g$ and its derivatives is computationally intensive in many practical settings due to the cost to store and factorize the dense large covariance matrices. One way to significantly reduce the computational challenge is to approximate the IWP model with a finite-dimensional approximation obtained through the finite element method (FEM); see \cite{rw208} and \cite{adaptivesmoothingsplines} for examples. However, these methods only apply for order $p = 2$ and do not provide inference for derivatives. Furthermore, they involve an additional approximation to the FEM itself to achieve computational feasibility.
In this article, we propose an alternative FEM approximation, called overlapping splines (O-spline), which pursues computational feasibility directly through the choice of test functions, and mirrors the construction of an IWP as the O-spline results from the multiple integrations of these same test functions.
This article makes the following contributions:
\begin{enumerate}
    \item[(a)] We propose a computationally efficient, finite-dimensional approximation for the $\text{IWP}_p(\sigma)$ model through FEM, called overlapping splines (O-splines), which is suitable for any order $p\geq1$. We denote by $\tilde{g}_k$ the O-spline approximation for $g$, where $k$ is the number of knots used to construct the approximation.
    \item[(b)] We show both theoretically and through simulations that the joint distribution of $\tilde{g}_k$ and its derivatives converges to the distribution under the true $\text{IWP}_p(\sigma)$, as the number of knots $k$ increases.
    \item[(c)] We propose a unified way to define the priors for the parameter $\sigma$ based on the notion of $h$-units predictive standard deviation (PSD), which has consistent interpretation across different order $p$. 
\end{enumerate}

The rest of the paper is structured as the following. In \cref{sec:prelim}, we describe the modelling context for this paper and provide some necessary background for the $\text{IWP}_p(\sigma)$ model.
In \cref{sec:method}, we introduce the proposed O-spline approximation; discuss its statistical and computational properties that justify its usage, and describe how to efficiently fit the approximation with the computational method in \cite{noeps}. We also introduce a unified and interpretable way to define the prior for the parameter $\sigma$.
In \cref{sec:simulations} and \cref{sec:examples}, we illustrate the practical utility of the proposed method through simulation studies and an analysis of the COVID-19 death rates and their rates of change over time.
Finally, we conclude with a discussion in \cref{sec:discussion}.

The codes to replicate all the results and examples in this article can be found at the corresponding online repository \href{https://github.com/AgueroZZ/Smooth_IWP_code}{github.com/AgueroZZ/Smooth_IWP_code}.

\section{Smoothing with the Integrated Wiener Processes}\label{sec:prelim}

Consider the following hierarchical model:
\begin{equation}\label{equ:smoothModel}
    \begin{aligned}
    Y_i|\boldsymbol{\eta} &\overset{ind}{\sim} \pi(Y_i|\boldsymbol{\eta}, \kappa),  \ i \in [n],  \\
    \eta_i &= \boldsymbol{v}_i ^T \boldsymbol{\beta} + \sum_{l=1}^{L}g_{l}(x_{li}), \ x_{li} \in \Omega_l \subset \mathbb{R}, \boldsymbol{v}_i \in \mathbb{R}^r \\
    g_l:\ &\Omega \rightarrow \mathbb{R}, \  g_l \overset{ind}{\sim} \text{IWP}_{p_l}(\sigma_l), \forall l \in [L].
    \end{aligned}
\end{equation}
Here $\pi(Y_i|\boldsymbol{\eta}, \kappa)$ is a twice continuously-differentiable density, linear predictors $\boldsymbol{\eta} = \{\eta_i, i \in [n]\}$, covariates $\{\boldsymbol{v}_i,x_{li}: l \in [L], i \in [n]\}$ and hyperparameter $\kappa$. 
Each unknown function $g_l:\Omega_l \rightarrow \mathbb{R}$ is assigned an independent $\text{IWP}_{p_l}(\sigma_l)$ model with the SD parameter $\sigma_l$ and order $p_l$. Developing a technique for simultaneous inference of $g_l$ and its derivatives is the purpose of this paper.

The above class of model in \cref{equ:smoothModel} belongs to the extended latent gaussian models of \cite{noeps}, and it includes the commonly used generalized additive models \cite{hastiegambook} as well as their extensions such as \cite{casecrossover} and \cite{coxphus}. For the purpose of exposition in the next few sections we consider develops for a single unknown function $g$.

\subsection{Integrated Wiener Processes}\label{sec:IWP}

To permit simultaneous inference of $g$ and its derivatives up to the order $p-1$ we adopt a $p^{th}$ order Integrated Wiener Process ($\text{IWP}_p(\sigma)$) as the $\GP$ model for $g$ through the following construction:
\begin{equation}\label{eqn:expansion}
\begin{aligned}
g(x) \overset{d}{=} \sum_{l=0}^{p-1}\gamma_l x^l + \sigma W_p(x),
\end{aligned}
\end{equation}
which we denote as $g \sim$ $\text{IWP}_p(\sigma)$. 
Let $\boldsymbol{\gamma} = (\gamma_0, ..., \gamma_{p-1})^T \sim N(0, \Sigma_{\boldsymbol{\gamma}})$, where $\Sigma_{\boldsymbol{\gamma}} = \text{diag}(\tau_0^2, ..., \tau_{p-1}^2 )$. We assume by default that $\tau_l^2 = 1000$ for all $0\leq l<p$.
The process $W_p$ is independent of $\boldsymbol{\gamma}$, and is defined through
\citep{shepp1966radon}:
\begin{equation}\label{equ:iwp}
\begin{aligned}
\frac{\partial^q}{\partial t^q} W_p(0) = 0 ~~~\forall~ q<p \quad\quad\text{and}\quad\quad\frac{\partial^p}{\partial t^p} W_p(x) \overset{d}{=} \xi(x) 
 \ 
\end{aligned}
\end{equation}
where $\xi(x)$ is a generalized Gaussian white noise process \citep{harvey1990forecasting, spde}. 
The ability to conduct simultaneous inference for
$g$ and its derivatives is immediate from the above construction given $g^{(q)}\sim \text{IWP}_{p-q}(\sigma)$ with coefficients $\gamma_l,~q\leq l<p$. We call this the \emph{simultaneous derivative property}. Note throughout the order $p$ is treated as fixed and specified {\em a priori}. 

One barrier to conducting inference in this setting involves the covariance matrix of $W_{p}({\bf x})$ where $\boldsymbol{x} = \{x_i, i \in [n] \}$. While the covariance functions for $W_p$ and its derivative have explicit forms given in \citep{robinson2010continuous} the resulting covariance matrix is dense in each instance and inversion is computationally demanding requiring $O(n^3)$ floating point operations. A potentially efficient solution is to utilize the Markov property of integrated Weiner processes \citep{gmrfbook, robinson2010continuous} and make use of efficient algorithms for sparse matrix storage, decomposition, and inversion \citep{rue2001fast}. However, the size of the covariance matrix still grows with $n$, and computations can still become challenging with many locations. As a result, we propose a finite-dimensional approximation to $W_p$ hence $g$ using the Finite Element Method (FEM), that retains the \emph{simultaneous derivative property} while having desirable computational and theoretical properties.

\section{Finite Element Method and Overlapping Splines}\label{sec:method}

In this section, we develop a basis function approximation to $W_p$ (and hence $g$) using the Finite Element Method (FEM). Assume without loss of generality that the region of interest $\Omega$ has the form of an interval $[0,a]$ where $a \in \mathbb{R}^+$. Here we define $\widetilde{W}_p(x)$ and $\tilde{g}_k(x)$ as
\begin{equation}\label{equ:FEM-approxi}
    \begin{aligned}
    \widetilde{W}_p(x) = \sum_{j=1}^{k} w_j \varphi_j(x); \quad \tilde{g}_k(x) &= \sum_{l=0}^{p-1}\gamma_l x^l + \sigma \widetilde{W}_p(x)
    \end{aligned}
\end{equation}
where $\mathbb{B}_k := \{\varphi_i: i \in [k]\}$ is a set of basis functions to be chosen, and $\boldsymbol{w} = (w_1, ..., w_k)^T$ are unknown (random) weights to be inferred. 
The distribution of the weights is determined by further choosing a set of test functions $\mathbb{T}_k:=\{\phi_i, i \in [k]\}$, and enforcing the distributional approximation:
\begin{equation}\label{weakSol}
\boldsymbol{w}\sim N({\bf 0}, \Sigma_{\boldsymbol{w}})
\end{equation}
where $\Sigma_{\boldsymbol{w}}^{-1}={\bf B}^T{\bf T}^{-1}{\bf B}$ and $\bf B$, with ${\bf B}_{ij}=\int_\Omega \phi_i(x) \frac{\partial^p\varphi_j}{\partial t^p}(x) dx$ and ${\bf T}_{ij}=\int_\Omega \phi_i(x)\phi_j(x)dx$ respectively.
As such, the distribution of the FEM approximation $\widetilde{W}_p$ hence $\tilde{g}_k$ is completely determined by the choices of test functions $\mathbb{T}_k$ and basis function $\mathbb{B}_k$.
These choices also determine the other properties of the approximation. A more formal exposition of the above details, involving stochastic differential equations, is given in Supplement B but may also be found in the literature \citep{shepp1966radon}.

An example of the FEM is given in \cite{rw208} in what is formally referred to as a \emph{Galerkin solution} for an $\text{IWP}_2$ model. Here they choose $\mathbb{B}_k = \mathbb{T}_k$ to be linear B-splines. As a result, ${\bf T}^{-1}$ is dense and to compute the precision matrix efficiently the authors suggest a further approximation that involves replacing $\bf T$ with $\bf A$, a diagonal matrix where each entry is obtained as the sum of the corresponding row in $\bf T$.
The resulting approximation is called the \textit{continuous second order random walk} (RW2) and the same strategy is later generalized in \cite{spde} for modeling continuous spatial variation and in \cite{adaptivesmoothingsplines} for adaptive smoothing with $\text{IWP}_2$. While efficient, the RW2 method suffers from the following disadvantages:
\bigskip
\newline
\indent
a. The method is only defined for order $p=2$;
\newline
\indent
b. To be computationally feasible the method involves two approximations;
\newline
\indent
c. While the sample path from $\text{IWP}_2$ model is once continuously differentiable, the
\newline
\indent\indent
sample from the approximation is not.
As such the \emph{simultaneous derivative property}
\newline
\indent\indent
does not obtain.
\bigskip
\newline
In the next section, we address the above by proposing a new FEM approximation that pursues simultaneous inference and computational feasibility directly.


\subsection{Overlapping Splines (O-Splines)}\label{sec:Approxi}

In this section, we utilize the Finite Element Method (FEM) to derive a finite-dimensional approximation of $\text{IWP}_p(\sigma)$ model for general $p\in \mathbb{Z}^+$. We pursue the {\em simultaneous derivative property} and computational efficiency directly, develop what is formally referred to as a \emph{least squares solution}, and then demonstrate that the approximation has the desired properties. We call our proposed method an \emph{Overlapping-spline} (O-spline) approximation for reasons that are apparent from their construction. The
development is intuitive, to ensure $\bf T$ is diagonal we define the set of test functions $\mathbb{T}_k$ to be piece-wise constant
\begin{equation}\label{equ:basis}
 \phi_i(x) = 
 \begin{cases} 
 0 & \text{if } x < s_{i-1}, \\
 1   & \text{if } x \in (s_{i-1},s_{i}], \\
 0  & \text{if } x \geq s_{i},
 \end{cases}
 \end{equation}
where $\boldsymbol{s} = \{s_i, i \in [k]\} \subset \Omega$ is a set of $k$ unique locations in increasing order with $s_0=0$.
A least squares solution uses basis functions that satisfy the following conditions \citep{spde}:
\begin{equation}\label{equ:LeastSquare}
    \begin{aligned}
   \frac{\partial^p}{\partial x^p} \varphi_i(x) = \phi_i(x), \quad
   \frac{\partial^q}{\partial x^q} \varphi_i(0) = 0 \ \text{for} \  \forall ~~q<p.
    \end{aligned}
\end{equation}
which is assured by defining (O-Splines) basis $\mathbb{B}_k$ through repeated integration of the test functions $\mathbb{T}_k$:
\begin{equation}
 \varphi_i(x) = 
 \begin{cases} 
 0 & \text{if } x < s_{i-1}, \\
 \frac{(x-s_{i-1})^p}{p!}   & \text{if } x \in (s_{i-1},s_{i}], \\
 \sum_{k = 1}^{p} \frac{d_i^k (x-s_{i})^{p-k}}{k!(p-k)!}  & \text{if } x \geq  s_{i},
 \end{cases}
 \end{equation}
where $d_i = s_i - s_{i-1}$.
The above choice of basis and test function implies the two matrices $\bf T$ and $\bf B$ in \cref{equ:FEM-approxi} are identical. Hence the precision matrix of the basis weights in the proposed approximation is diagonal as $\Sigma_{\boldsymbol{w}}^{-1}={\bf B}^T{\bf T}^{-1}{\bf B} = {\bf T}$, with $i$th diagonal element being $\int_\Omega \phi^2_i(x)dx = d_i$.

From the above construction, it is immediately apparent that the O-spline approximation inherits the \emph{simultaneous derivative property}
\begin{equation}\label{equ:sim-property-OS}
    \begin{aligned}
        \frac{\partial^q}{\partial t^q}\widetilde{W}_p(x) = \frac{\partial^q}{\partial t^q} \sum_j^{k} w_j \varphi_j(x) = \sum_j^{k} w_j \varphi^{(q)}_j(x) = \widetilde{W}_{p-q}(x).
    \end{aligned}
\end{equation}

\noindent
In contrast to the RW2 method it also has the following three advantages:
\bigskip
\newline
\indent
a. The O-spline approximation handles IWP$_p(\sigma)$ at any choice of $p\geq 1$, and hence
\newline
\indent\indent
allows $p$ to be chosen based on the prior knowledge on the differentiability of $g$;
\newline
\indent
b. The sample path from the O-spline approximation inherits the same differentiability
\newline
\indent\indent
as the original IWP$_p(\sigma)$ model, which simultaneously yields model-based
\newline
\indent\indent
inference for the derivatives;    
\newline
\indent
c. Since the upper-trapezoidal and diagonal structures of $\Phi$ and $\Sigma_{\boldsymbol{w}}$ already facilitate
\newline
\indent\indent
efficient matrix computations, the O-spline approximation is directly derived as an
\newline
\indent\indent
FEM approximation to the IWP model without further matrix approximations.    
\bigskip
\newline
The O-spline approximation also has desirable theoretic properties described in the following section.

\subsection{Theoretic properties of the O-spline approximation}\label{sec:OSpline}

In addition to the immediate advantages of the proposed O-spline approximation it also has desirable theoretic properties. The proofs of these results appear in the appendix.

First, the proposed O-spline approximation avoids the ambiguity of knot selection and placement, which has been the central problem to address for spline-based approximation method \citep{eilers1996flexible}, such as discussed in \cite{eilers2010splines} and \cite{ruppert2000theory}. 
As stated in \cref{lem:convergence}, the covariance function of the proposed approximation will converge at a linear rate to the covariance of the true IWP$_p(\sigma)$ with any $p\geq 1$, as more knots get equally placed over the region of interest. 
As a result, in terms of approximation accuracy, it is better to use as many equally spaced knots as is computationally feasible.
\begin{lemma}[Convergence of O-spline Approximation]\label{lem:convergence} 
Let $\Omega = [a,b]$ where $a,b \in \mathbb{R}^+$ and let $g \sim \text{IWP}_p(\sigma)$ with $p \in \mathbb{Z}^+$. Assume the knots $\{s_1, ..., s_k\}$ are equally spaced over $\Omega$ for each $k \in \mathbb{N}$, and $\tilde{g}_k$ denotes the corresponding $p$th order (O-spline) approximation defined as in \cref{equ:FEM-approxi}, then:
$$||\C - \C_{k}||_{\infty} = O(1/k),$$
where $\C(s,t) = \Cov[g(s),g(t)]$, $\C_{k}(s,t) = \Cov[\tilde{g}_k(s),\tilde{g}_k(t)]$ and $||\C - \C_{k}||_{\infty} \equiv \sup_{s,t \in \Omega}|\C(s,t) - \C_{k}(s,t)|$.
\end{lemma}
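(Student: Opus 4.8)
The plan is to recast the claim as a deterministic approximation estimate in $L^2(\Omega)$. As noted at the start of \cref{sec:method} we may take $\Omega = [0,b]$, and we assume the $k$ equally spaced knots have $s_k = b$, so that the mesh $I_j := (s_{j-1},s_j]$, $j\in[k]$, covers $\Omega$ with widths $d_j = b/k$. Since the polynomial trend $\sum_{l=0}^{p-1}\gamma_l x^l$ appears identically in $g$ and $\tilde g_k$, and $\boldsymbol\gamma$ is independent of both $W_p$ and the basis weights $\boldsymbol w$, the trend contributes the same amount to $\C$ and $\C_k$ and cancels in the difference; hence $\C(s,t) - \C_k(s,t) = \sigma^2\left(\Cov[W_p(s),W_p(t)] - \Cov[\widetilde W_p(s),\widetilde W_p(t)]\right)$. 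Writing $h_s(u) := (s-u)_+^{p-1}/(p-1)!$ for the Green's function of the operator in \cref{equ:iwp}, the stochastic-integral representation of $W_p$ (Supplement B; see also \citep{shepp1966radon, robinson2010continuous}) gives $\Cov[W_p(s),W_p(t)] = \langle h_s, h_t \rangle_{L^2[0,b]}$.

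Next I would recognise the approximate covariance as a projection of the same Green's functions. The least-squares conditions \cref{equ:LeastSquare} together with the Cauchy formula for $p$-fold integration give $\varphi_j(x) = \langle h_x, \phi_j\rangle_{L^2[0,b]}$, and the diagonal precision from \cref{sec:Approxi} gives $\Cov[w_i, w_j] = \delta_{ij}/d_j$. As $\{\phi_j\}$ is orthogonal with $\|\phi_j\|_{L^2}^2 = d_j$, the $L^2$-orthogonal projection onto piecewise-constant functions on the mesh is $P_k f = \sum_{j} d_j^{-1}\langle f,\phi_j\rangle \phi_j$, so
$$
\Cov[\widetilde W_p(s),\widetilde W_p(t)] \;=\; \sum_{j=1}^{k}\frac{\varphi_j(s)\varphi_j(t)}{d_j} \;=\; \sum_{j=1}^{k}\frac{\langle h_s,\phi_j\rangle\langle h_t,\phi_j\rangle}{d_j} \;=\; \langle P_k h_s, P_k h_t\rangle .
$$
Since $P_k$ is self-adjoint and idempotent, $\langle h_s, h_t\rangle - \langle P_k h_s, P_k h_t\rangle = \langle h_s - P_k h_s,\, h_t - P_k h_t\rangle$, so Cauchy--Schwarz gives
$$
\|\C - \C_k\|_\infty \;\le\; \sigma^2 \sup_{s\in\Omega}\|h_s - P_k h_s\|_{L^2[0,b]}^2 .
$$

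It then remains to bound the piecewise-constant approximation error for $h_s$ uniformly in $s$. For $p\ge 2$, $u\mapsto h_s(u)$ is Lipschitz on $[0,b]$ with constant at most $b^{p-2}/(p-2)!$ — a bound that does not depend on $s$ — and the elementary estimate $\|f - P_k f\|_{L^2(I_j)}^2 \le \tfrac{1}{12}M^2 d_j^3$ for an $M$-Lipschitz $f$, summed over $j$, gives $\sup_{s}\|h_s - P_k h_s\|_{L^2}^2 = O(k^{-2})$. For $p=1$, $h_s = \mathbbm{1}_{[0,s]}$ is bounded by $1$ and $P_k h_s$ is exact on every mesh interval not containing $s$, while on the single interval that does the squared error is at most its width $d_j = b/k$; hence $\sup_s\|h_s - P_k h_s\|_{L^2}^2 = O(k^{-1})$. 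Combining both cases yields $\|\C - \C_k\|_\infty = O(1/k)$ (and in fact $O(1/k^2)$ when $p\ge 2$). The crux is this last step: one must check that the Lipschitz and sup-norm constants for the family $\{h_s\}_{s\in\Omega}$ are genuinely uniform in $s$, so that mesh refinement controls the error simultaneously for all covariance arguments, and one must treat the non-smooth $p=1$ kernel separately — it is this case that pins the rate at $O(1/k)$ rather than something faster.
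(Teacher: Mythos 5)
Your proof is correct, and it takes a genuinely different route from the paper's. The paper computes the $p=1$ case explicitly (covariance $\min\{s,t\}$ versus $\lfloor ks\rfloor/k$) and then inducts on $p$, using the fact that the covariance of an integrated Gaussian process is the double integral of the covariance of the integrand (Proposition A.1), so the $O(1/k)$ bound for $p=1$ propagates upward unchanged. You instead write both covariances as inner products of the Green's functions $h_s(u)=(s-u)_+^{p-1}/(p-1)!$: the exact one as $\langle h_s,h_t\rangle$ and the O-spline one as $\langle P_kh_s,P_kh_t\rangle$ with $P_k$ the $L^2$ projection onto piecewise constants on the mesh (the identity $\varphi_j(x)=\langle h_x,\phi_j\rangle$ is Cauchy's repeated-integration formula, and $\mathrm{Var}(w_j)=1/d_j$ supplies the normalisation), so that the error is exactly $\sigma^2\langle h_s-P_kh_s,\,h_t-P_kh_t\rangle$ and the whole lemma reduces to a single deterministic approximation estimate, uniform in $s$. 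What the paper's argument buys is elementary self-containedness and a structure (iterated integration of covariances) that is reused verbatim in the proof of Theorem 1; what yours buys is a non-inductive argument valid for all $p$ at once, the observation that the error kernel is itself positive semi-definite (the O-spline systematically under-represents variance), and a strictly sharper rate --- $O(1/k^{2})$ for $p\ge 2$, with only the non-smooth $p=1$ kernel pinning the stated $O(1/k)$; this sharpening is consistent with the paper's bound, which is lossy because it discards the partial contribution of the basis function on the interval containing $s$. Two housekeeping points you already flag but should keep: the knots must cover $\Omega$ (i.e.\ $s_k=b$) for $P_k$ to act as a projection on all of $L^2[0,b]$, and the Lipschitz constant $b^{p-2}/(p-2)!$ of $u\mapsto h_s(u)$ must be (and is) uniform in $s\in\Omega$.
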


Secondly, the \emph{simulateous derivative property} of the O-spline approximation implies the knots sequence and the weight coefficients used for the function will be the same as those for the derivatives and the only additional step is to recompute the design matrix at a lower order.
This property makes our proposed O-spline approximation appropriate and convenient for the joint inference of the function with its derivatives, and the following \cref{thrm:convergence-joint} can be proved from this property and \cref{lem:convergence}:
\begin{theorem}[Main Theorem]\label{thrm:convergence-joint} 
Given the same setting and notations as in \cref{lem:convergence},  for any non-negative integers $q_1 \leq p-1$ and $q_2 \leq p-1$:
$$||\C^{(q_1,q_2)} - \C^{(q_1,q_2)}_{k}||_{\infty} = O(1/k),$$
where $\C^{(q_1,q_2)}(s,t) = \Cov[g^{(q_1)}(s), g^{(q_2)}(t)]$ and $\C^{(q_1,q_2)}_{k}(s,t) = \Cov[\tilde{g}^{(q_1)}_k(s), \tilde{g}^{(q_2)}_k(t)]$.
\end{theorem}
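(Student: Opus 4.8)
The plan is to reduce the mixed‑order statement to the already‑established diagonal case of \cref{lem:convergence}, applied at reduced orders, by exploiting the least‑squares structure behind the O‑spline. First I would set up a Hilbert‑space picture. Writing $W_p(x)=\int_0^x \tfrac{(x-u)^{p-1}}{(p-1)!}\,dB(u)$ for a standard Brownian motion $B$ (which solves the defining equations \eqref{equ:iwp}), the mean‑square derivative is $W_p^{(q)}(x)=\int_0^x f_x^{(q)}(u)\,dB(u)$ with $f_x^{(q)}(u)=\tfrac{(x-u)^{p-1-q}}{(p-1-q)!}\mathbf{1}\{0\le u\le x\}$, so in the Gaussian Hilbert space generated by $B$ we may identify $W_p^{(q)}(x)$ with $f_x^{(q)}\in L^2(\Omega)$. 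On the O‑spline side the least‑squares weights are $\boldsymbol{w}=\mathbf{T}^{-1}\boldsymbol{z}$ with $z_j=\int_\Omega\phi_j\,dB$ and $\mathbf{T}$ diagonal with entries $\|\phi_j\|^2=d_j$; since $\varphi_j^{(q)}(x)=\int_0^x\tfrac{(x-u)^{p-1-q}}{(p-1-q)!}\phi_j(u)\,du=\langle f_x^{(q)},\phi_j\rangle$ and the $\phi_j$ are mutually orthogonal, $\widetilde{W}_p^{(q)}(x)=\sum_j\tfrac{z_j}{d_j}\varphi_j^{(q)}(x)$ is identified with $\sum_j\tfrac{\langle f_x^{(q)},\phi_j\rangle}{\|\phi_j\|^2}\phi_j=Pf_x^{(q)}$, where $P$ is the $L^2(\Omega)$‑orthogonal projection onto the step functions on the knot partition. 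In other words, the least‑squares coupling realizes $\widetilde{W}_p$ (and each of its derivatives) as the $L^2$‑projection of $W_p$ (resp.\ its derivatives); this is the precise content of the remark that the derivative O‑spline keeps the same knots and weights and only ``recomputes the design matrix at a lower order'', and, via \eqref{equ:sim-property-OS} and the fact that $\mathbf{T}$ does not depend on $p$, it shows that $\tilde{g}_k^{(q)}$ is exactly the order‑$(p-q)$ O‑spline approximation of $g^{(q)}\sim\text{IWP}_{p-q}(\sigma)$ built from the same knots.

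Then I would cancel the polynomial part and use orthogonality. Since $\boldsymbol{\gamma}$ is independent of $B$ and the same $\boldsymbol{\gamma}$ enters both $g$ and $\tilde{g}_k$, the polynomial contributions to $\C^{(q_1,q_2)}$ and $\C^{(q_1,q_2)}_k$ coincide and drop out of the difference, leaving
\[
\C^{(q_1,q_2)}(s,t)-\C^{(q_1,q_2)}_k(s,t)=\sigma^2\big(\langle f_s^{(q_1)},f_t^{(q_2)}\rangle-\langle Pf_s^{(q_1)},Pf_t^{(q_2)}\rangle\big)=\sigma^2\,\langle (I-P)f_s^{(q_1)},(I-P)f_t^{(q_2)}\rangle,
\]
where the last step uses $P=P^\ast=P^2$; equivalently, writing $U_i:=W_p^{(q_i)}$ and $V_i:=\widetilde{W}_p^{(q_i)}$ at the relevant points, the error $U_i-V_i$ is mean‑square orthogonal to $\mathrm{span}\{\varphi_j\}$, so $\Cov(U_1,U_2)-\Cov(V_1,V_2)=\mathbb{E}[(U_1-V_1)(U_2-V_2)]$. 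By Cauchy--Schwarz this is at most $\sigma^2\|(I-P)f_s^{(q_1)}\|\,\|(I-P)f_t^{(q_2)}\|$, and each factor is a diagonal quantity: $\sigma^2\|(I-P)f_t^{(q)}\|^2=\mathbb{E}\big[(W_p^{(q)}(t)-\widetilde{W}_p^{(q)}(t))^2\big]=\C^{(q,q)}(t,t)-\C^{(q,q)}_k(t,t)$ (again after cancellation of the polynomial part), which by \cref{lem:convergence} applied at order $p-q\ge1$ to $\tilde{g}_k^{(q)}$ is uniformly $O(1/k)$. Hence $\sup_x\|(I-P)f_x^{(q)}\|=O(k^{-1/2})$ for every $q\le p-1$, and multiplying the two factors gives $\|\C^{(q_1,q_2)}-\C^{(q_1,q_2)}_k\|_\infty=O(k^{-1/2})\cdot O(k^{-1/2})=O(1/k)$.

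The main obstacle I anticipate is that the naive routes fail, so the reduction must be carried out carefully. Writing $\C^{(q_1,q_2)}=\partial_s^{q_1}\partial_t^{q_2}\C$ and quoting \cref{lem:convergence} is useless, since one cannot bound a derivative by a sup‑norm; and even with the correct decomposition, a Cauchy--Schwarz estimate that discards the orthogonality of the error only yields the suboptimal rate $O(k^{-1/2})$. The crux is therefore establishing that the O‑spline construction is a genuine $L^2$‑projection, which is what collapses the mixed‑order difference into a product of two \emph{diagonal} errors, each exactly the quantity that \cref{lem:convergence} (at a reduced order) controls. A few routine points still need attention: justifying the interchange of mean‑square differentiation with the covariance (legitimate since $W_p$ is mean‑square differentiable up to order $p-1$); noting that the polynomial prior inherited by $g^{(q)}$ is immaterial because it cancels in the difference, so \cref{lem:convergence} applies as stated; and the borderline case $q=p-1$, where $f_x^{(p-1)}$ is merely a step function, for which the projection error is still $O(1/k)$ --- consistent with, and in fact the source of, the $O(1/k)$ rate in \cref{lem:convergence} itself.
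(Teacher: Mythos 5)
Your proof is correct, and it takes a genuinely different route from the paper's. The paper proves the theorem by writing both $\C^{(q_1,q_2)}$ and $\C^{(q_1,q_2)}_k$ as iterated integration operators $I^{p-q_1-1}_s I^{p-q_2-1}_t$ applied to the order-one covariances (via Fubini and \cref{prop:GPderiv}), and then pushes the explicit $2/k$ bound from the $p=1$ case of \cref{lem:convergence} through the integrals over the compact domain; the cross-covariance case $(0,q)$ is worked out in detail and the general case is asserted from \cref{equ:sim-property-OS}. You instead identify the O-spline with the $L^2(\Omega)$-orthogonal projection $P$ of the Green's-function kernels $f_x^{(q)}$ onto the span of the step functions --- which is exactly right, since $\Cov[\widetilde{W}_p^{(q_1)}(s),\widetilde{W}_p^{(q_2)}(t)]=\sum_j d_j^{-1}\langle f_s^{(q_1)},\phi_j\rangle\langle f_t^{(q_2)},\phi_j\rangle=\langle Pf_s^{(q_1)},Pf_t^{(q_2)}\rangle$ follows directly from the diagonal precision $\mathbf{T}$, without even needing the explicit coupling through $z_j=\int\phi_j\,dB$ --- and then collapse the mixed error to $\langle (I-P)f_s^{(q_1)},(I-P)f_t^{(q_2)}\rangle$ and apply Cauchy--Schwarz to two diagonal residuals, each controlled by \cref{lem:convergence} at reduced order via the simultaneous derivative property. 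What your approach buys is a symmetric, one-stroke treatment of general $(q_1,q_2)$ and a structural explanation of why the rate is $O(1/k)$ rather than $O(k^{-1/2})$ (a product of two $O(k^{-1/2})$ residual norms); what the paper's approach buys is elementarity (no Hilbert-space or Cameron--Martin machinery) and an explicit constant. One cosmetic slip: the error $U_i-V_i$ is orthogonal to the span of the $z_j$'s, i.e.\ to $\mathrm{span}\{\phi_j\}$ under the isometry, not to $\mathrm{span}\{\varphi_j\}$; this does not affect the argument.
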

This theorem gives general convergence results for the proposed O-spline approximation, implying that every cross-covariance and hence the cross-correlation between the approximation and its derivatives will converge to the true value.
In particular, that implies any finite-dimensional distributions (f.d.d) of $\tilde{g}_k$ and its derivatives will converge to the true f.d.d under the $\text{IWP}_p(\sigma)$ model:
\begin{corollary}[Convergence of Finite Dimensional Distribution]\label{cor:fdd} 
Consider the same setting and notations as in \cref{lem:convergence}, let $p,m \in \mathbb{Z}^+$, then:
$$[\tilde{g}^{(q_1)}_k(x_1), \tilde{g}^{(q_2)}_k(x_2), ..., \tilde{g}^{(q_m)}_k(x_m)]^T \overset{d}{\rightarrow} [g^{(q_1)}(x_1), g^{(q_2)}(x_2), ..., g^{(q_m)}(x_m)]^T,$$
as $k \rightarrow \infty$, where $\{x_i \in \Omega:i \in [m]\}$ and $\{q_i \in \mathbb{Z}^+: i \in [m], q_i < p\}$ are arbitrary.
\end{corollary}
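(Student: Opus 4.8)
The plan is to reduce the statement to convergence of covariance matrices and then invoke the joint Gaussianity of everything involved, so that \cref{thrm:convergence-joint} does essentially all of the work.

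First I would record that, for any fixed $k$, the random vector $\big(\tilde g^{(q_1)}_k(x_1),\dots,\tilde g^{(q_m)}_k(x_m)\big)^T$ is multivariate Gaussian with mean zero. Indeed, by the simultaneous derivative property \cref{equ:sim-property-OS} each coordinate $\tilde g^{(q_i)}_k(x_i)$ is a deterministic linear combination of the common Gaussian coefficients $(\boldsymbol\gamma,\boldsymbol w)$ --- only the design vector changes with $q_i$ and $x_i$ --- and $(\boldsymbol\gamma,\boldsymbol w)$ is mean-zero Gaussian by \cref{eqn:expansion} and \cref{weakSol}; hence the whole vector is a linear image of a Gaussian and is itself mean-zero Gaussian, with covariance matrix $\Sigma_k$ whose $(i,j)$ entry is $\C^{(q_i,q_j)}_k(x_i,x_j)$. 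Likewise, under the true model $g\sim\text{IWP}_p(\sigma)$ the simultaneous derivative property gives $g^{(q)}\sim\text{IWP}_{p-q}(\sigma)$ with the shared coefficients $\gamma_l$, so $\big(g^{(q_1)}(x_1),\dots,g^{(q_m)}(x_m)\big)^T$ is again mean-zero multivariate Gaussian, with covariance matrix $\Sigma$ given by $[\Sigma]_{ij}=\C^{(q_i,q_j)}(x_i,x_j)$.

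The core step is then immediate: by \cref{thrm:convergence-joint}, $\|\C^{(q_i,q_j)}-\C^{(q_i,q_j)}_k\|_\infty = O(1/k)$ for each pair $(i,j)$ (note $q_i,q_j\le p-1$, as required by the theorem), so in particular $[\Sigma_k]_{ij}\to[\Sigma]_{ij}$ for every $i,j$; that is, $\Sigma_k\to\Sigma$ entrywise. Finally I would conclude convergence in distribution via the L\'evy continuity theorem: the characteristic function of the $k$th vector is $\boldsymbol t\mapsto \exp(-\tfrac12\boldsymbol t^T\Sigma_k\boldsymbol t)$, which converges pointwise to $\exp(-\tfrac12\boldsymbol t^T\Sigma\boldsymbol t)$, the characteristic function of $N(\mathbf 0,\Sigma)$, and the limit is continuous at $\boldsymbol t=\mathbf 0$; hence the vectors converge in distribution to $\big(g^{(q_1)}(x_1),\dots,g^{(q_m)}(x_m)\big)^T$.

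I do not expect a genuine obstacle here --- the result is a soft corollary of \cref{thrm:convergence-joint}. The only point deserving a careful sentence rather than a one-liner is the justification that the true process together with its derivatives is \emph{jointly} Gaussian with the stated cross-covariances $\C^{(q_1,q_2)}$; this is part of the $\text{IWP}_p(\sigma)$ construction and can be read off the state-space/SDE representation in Supplement~B, under which $(W_p, W_p^{(1)},\dots,W_p^{(p-1)})$ is a Gaussian Markov process, but it should be stated explicitly so that the ``limit'' side of the convergence is unambiguously the claimed Gaussian vector. Using characteristic functions rather than densities also sidesteps the harmless possibility that $\Sigma$ or some $\Sigma_k$ is singular.
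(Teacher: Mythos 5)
Your proposal is correct and follows essentially the same route as the paper, which simply invokes \cref{thrm:convergence-joint} together with the Cram\'er--Wold device and L\'evy's continuity theorem; your use of the multivariate Gaussian characteristic function $\exp(-\tfrac12\boldsymbol t^T\Sigma_k\boldsymbol t)$ is just the Cram\'er--Wold reduction carried out explicitly. Your added care in verifying that both the approximating vector and the limit vector are jointly mean-zero Gaussian with the covariances supplied by \cref{thrm:convergence-joint} is a worthwhile elaboration of a step the paper leaves implicit.
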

\begin{proof}
    Using \cref{thrm:convergence-joint} and Cramer-Wold device, \cref{cor:fdd} directly follows from Levy's continuity theorem.
\end{proof}
\noindent
Note this result can not be achieved for the RW2 approximation method because its sample path is not $p-1$ times continuously differentiable.

\subsection{Prior and interpretation of $\sigma$}\label{sec:intepret_var}

A key consideration in our context is the choice of an appropriate prior for the parameter $\sigma$. This is made complex given the interpretability of $\sigma$ depends on the order $p$. For example, IWP models with the same value of $\sigma$, but different orders, can have sample paths with extremely different variability making the choice of prior problematic for practitioners.

To address this we adopt a strategy similar to \cite{scalingigmrf} and assign a prior to a quantity that still involves $\sigma$ but has a consistent interpretation across different values of $p$. Aiming for interpretability we consider the standard deviation of the function $g$ at a future location $x+h$ given the value of $g$ and its derivatives at location $x$, denoted as:
\begin{equation}\label{equ:condition_var}
    \begin{aligned}
    \sigma(h) = \text{Var}\bigg[g(x+h) \bigg| g(x), g^{(1)}(x), ... , g^{(p-1)}(x)\bigg]^{1/2} = \frac{\sqrt{h^{(2p-1)}} \sigma}{\sqrt{(2p-1)}(p-1)!}.
    \end{aligned}
\end{equation}
We refer this quantity $\sigma(h)$ as the $h$-units predictive SD (PSD), which quantifies the uncertainty in predicting the function $g$ at $h$ units ahead, using its information up to the current location.
The choice of the unit $h$ can be made based on its practical relevance.
This differs from the approach of \cite{scalingigmrf}, which uses the marginal standard deviation, but benefits from $\sigma(h)$ being invariant to the location $x$ while the marginal standard deviation is not. At the same time, \cite{scalingigmrf} applies their prior elicitation method on the finite-dimensional approximation of the $\text{IWP}_p(\sigma)$ process, whereas our approach is entirely based on the original $\text{IWP}_p(\sigma)$ process, hence invariant to the approximation choices such as the number and placement of the knots.

Finally, motivated by \cite{pcprior}, we consider an exponential prior for $\sigma(h)$ of the form $\text{P}(\sigma(h) > u) = \alpha$, where $\{u,\alpha\}$ will be chosen by the user. 
The prior for $\sigma$ is then recovered by scaling the exponential prior for $\sigma(h)$ with $(p-1)!\sqrt{(2p-1)/h^{(2p-1)}}$.

\subsection{Approximate Bayesian Inference}\label{sec:comp}

For the remainder of this paper we embed the above developments in the ELGM context of \cite{noeps}. Using the adaptive Gaussian Hermite quadrature (\texttt{aghq}) algorithm introduced there, we outline how the proposed O-spline approximation in \cref{equ:FEM-approxi} may be implemented within an ELGM in \cref{equ:smoothModel}.

Following the details of \cite{noeps} we let $\regparam = (\boldsymbol{w}, \boldsymbol{\gamma},\boldsymbol{\beta}),~ \varparam = (\sigma, \kappa)$ and have $\regparam\sim\text{N}\left[\boldsymbol{0},\Sigma(\varparam)\right]$ where $\Sigma(\varparam) = \text{diag}(\sigma^2\Sigma_{\boldsymbol{w}}, \Sigma_{\boldsymbol{\gamma}}, \Sigma_{\boldsymbol{\beta}})$ with $\Sigma_{\boldsymbol{\beta}}$ and $\Sigma_{\boldsymbol{\gamma}}$ being diagonal matrices. The goal is to make fully Bayesian inferences for $\regparam,\varparam$ based on the posterior distributions $\pi(\regparam|\response)$ and $\pi(\varparam|\response)$.
Since both posterior distributions are intractable, we compute their corresponding approximations $\laplaceapprox(\varparam|\response)$ and $\widetilde{\pi}(\regparam|\response)$ using the methods of \cite{noeps}. First we compute the following Gaussian and Laplace approximations:
\begin{equation}\label{equ:Gaussian_Laplace}
\begin{aligned}
\gaussapprox(\regparam|\varparam,\response) = N(\wmode{\varparam}, \whess{\varparam}^{-1}), \ 
    \laplaceapprox(\varparam,\response) = \frac{\pi(\wmode{\varparam},\varparam,\response)}{\gaussapprox(\wmode{\varparam}|\varparam,\response)},
\end{aligned}
\end{equation}
where $\wmode{\varparam} = \argmax\log\pi(\regparam,\varparam,\response)$ is the mode and $\whess{\varparam}$ is the negative Hessian evaluated at the mode.
Explicitly, we write $\boldsymbol{\eta}$ as $\boldsymbol{\eta} = \boldsymbol{\Phi} \boldsymbol{w} + \boldsymbol{P} \boldsymbol{\gamma} + 
\boldsymbol{V} \boldsymbol{\beta}$, where $\boldsymbol{V}, \boldsymbol{P}$ and $\boldsymbol{\Phi}$ are the appropriate design matrices, and compute $\whess{\varparam}$ as:
\begin{equation}
\begin{aligned}
\whess{\varparam} &=
\begin{pmatrix}
\sigma^2\Sigma_{\boldsymbol{w}} & 0 & 0 \\ 0 & \Sigma_{\boldsymbol{\gamma}} & 0 \\ 0 & 0 & \Sigma_{\boldsymbol{\beta}}
\end{pmatrix} + 
\begin{pmatrix}
\boldsymbol{\Phi}^T \\ \boldsymbol{P}^T \\ \boldsymbol{V}^T
\end{pmatrix}
~{\partial_{\boldsymbol{\eta}}^2\ \log \pi(\response|\hat{\boldsymbol{\eta}}(\varparam), \kappa)}
~\begin{pmatrix}\boldsymbol{\Phi}\ \boldsymbol{P}\ \boldsymbol{V}
\end{pmatrix}
\end{aligned}
\end{equation}
We then utilize the \texttt{aghq} package in the \texttt{R} language \citep{aghqsoftware,aghqus} to compute the normalized Laplace approximation $\laplaceapprox(\varparam|\response)$ and hence $\widetilde{\pi}(\regparam|\response)$.  
With a sample $\{\regparam_m\}_{m=1}^{M}$ from $\widetilde{\pi}(\regparam|\response)$ available, a sample $\{\widetilde{g}_{m}(t)\}_{m=1}^{M}$ or its derivatives can be obtained, for any $t\in \Omega$.


\section{Simulation Study}\label{sec:simulations}

\subsection{Assessment of approximation accuracy}\label{sec:sim1}

In this section, we will assess the quality of the O-spline approximation to the IWP model.
To compare the results across different choices of $p$, we will compute the correlation functions instead of the covariance functions, respectively using the true IWPs and their O-spline approximations.
Without the loss of generality, we assume $\gamma_i = 0$ for each $0 \leq i < p$ and $\sigma = 1$ in \cref{eqn:expansion}.

For order $p = 1,2,3$ and $4$, we compute the auto-correlation function $\rho(5,x) = \text{Cor}[W_p(5),W_p(x)]$ of the true IWP. We then obtain the corresponding approximations using O-spline with respectively $k = 5, 10, 30$ or $100$ knots placed equally over the interval $[0,15]$. As shown in Figure \ref{fig:converg}(a-d), the approximations using O-spline are always accurate for all $p$ when $k \geq 30$. As $p$ increases, the improved approximation quality with higher $k$ becomes more obvious at larger values of $x$.

To assess the quality of approximation in the sense of joint distributions, we then compare the cross-correlation function between the IWP with its $q$th derivative, denoted as $\rho^{(0,q)}(5,x) = \text{Cor}[W_p(5), W^{(q)}_p(x)]$.
We consider $p = 2,3,4$ and $q = 1$ or $2$. Their O-spline approximations are obtained using the same setting as above. The results are summarized in Figure \ref{fig:converg}(e-h). Similar to the previous result on the auto-correlation, the cross-correlations are always accurately approximated by the O-spline approximation for all $p$ when $k \geq 30$.

The two results above together suggest that our proposed O-spline approximations are practically accurate at different orders of $p$ even with small $k$, both in terms of the distributions within the IWP as well as the joint distribution with its derivatives.

\begin{figure}[!p]
    \centering
             \subfigure[IWP $p=1$]{
      \includegraphics[width=0.23\textwidth]{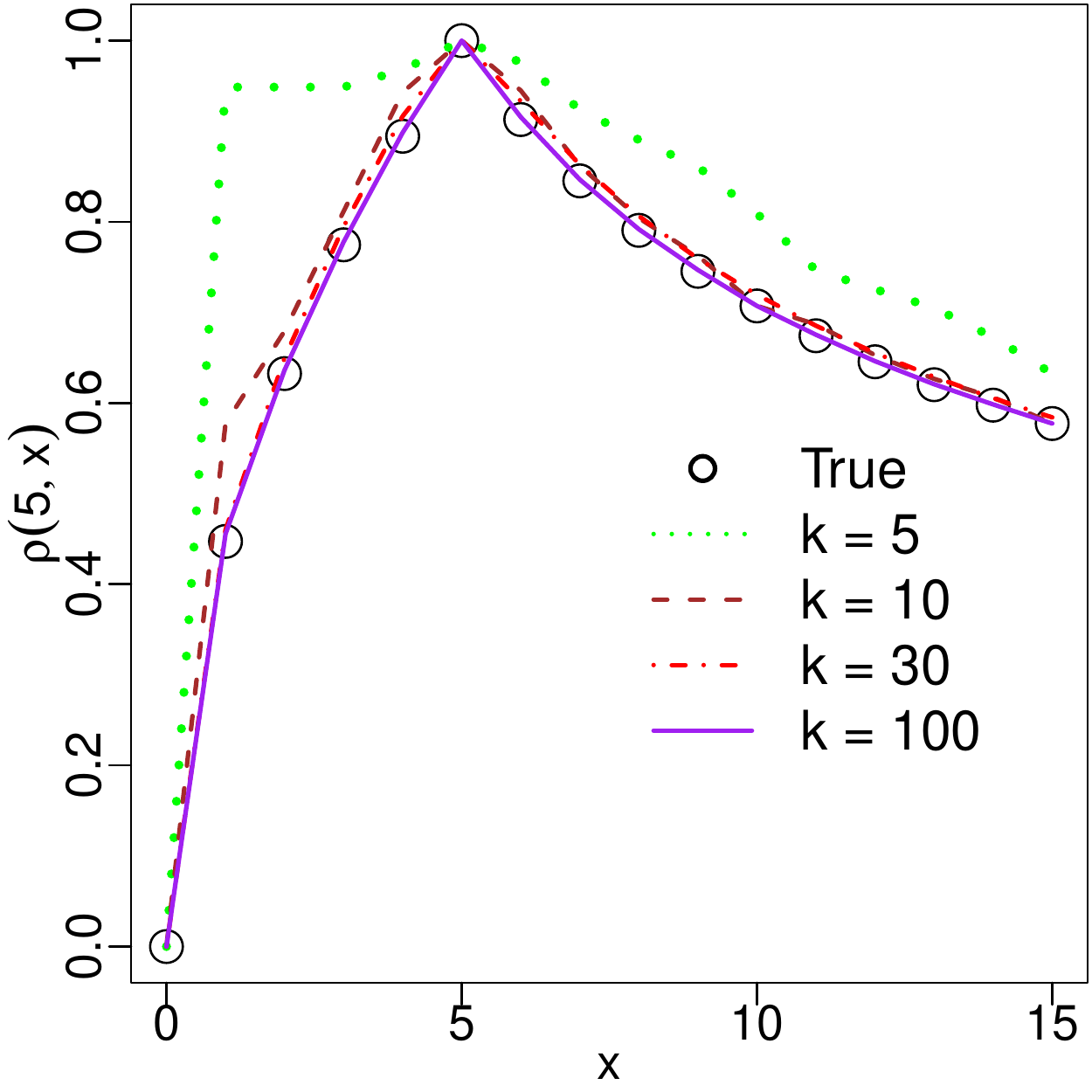}
      }\vspace{-0.5\baselineskip}%
      \subfigure[IWP $p=2$]{
            \includegraphics[width=0.23\textwidth]{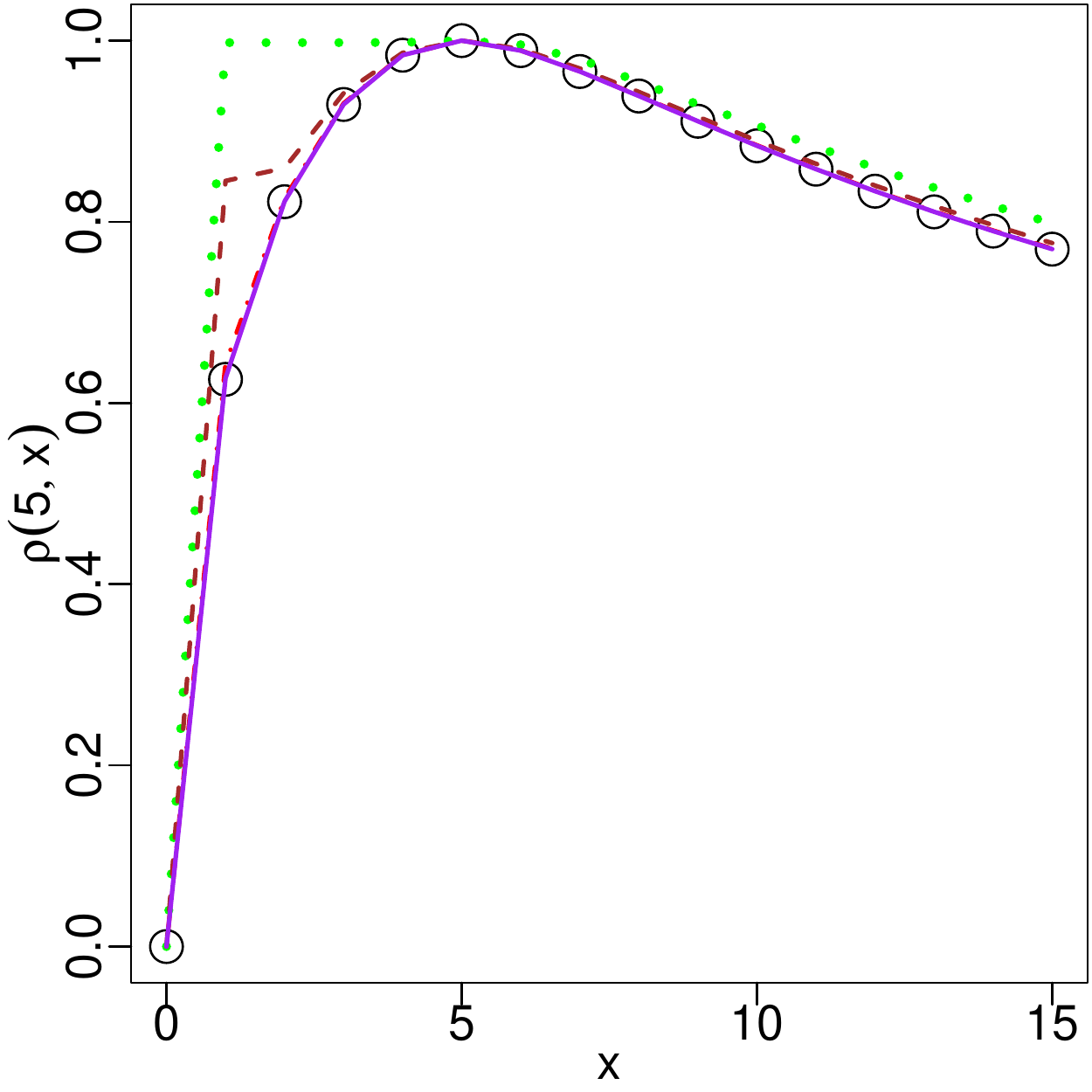}
            }%
            \subfigure[IWP $p=3$]{
                  \includegraphics[width=0.23\textwidth]{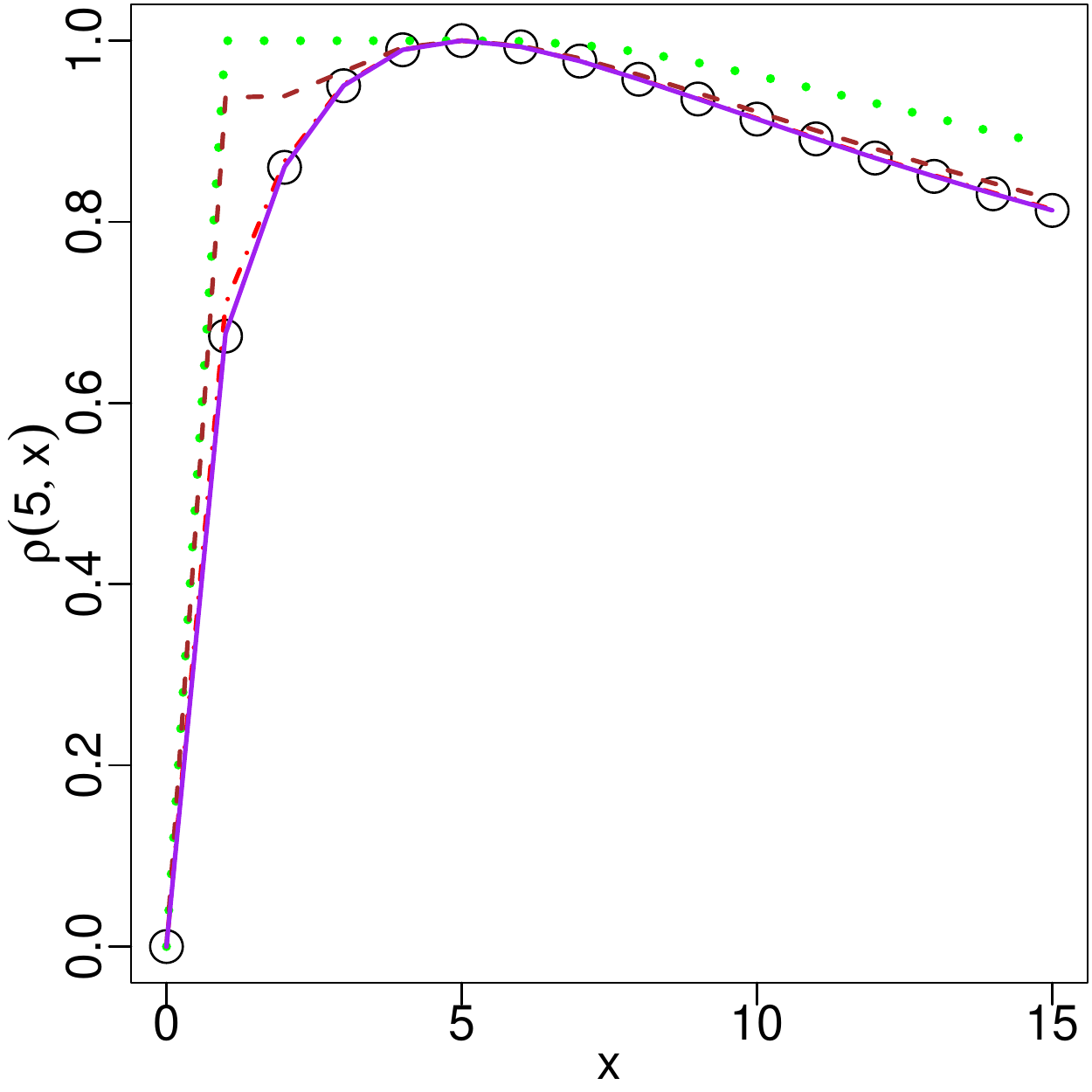}
                  }%
                  \subfigure[IWP $p=4$]{
                  \includegraphics[width=0.23\textwidth]{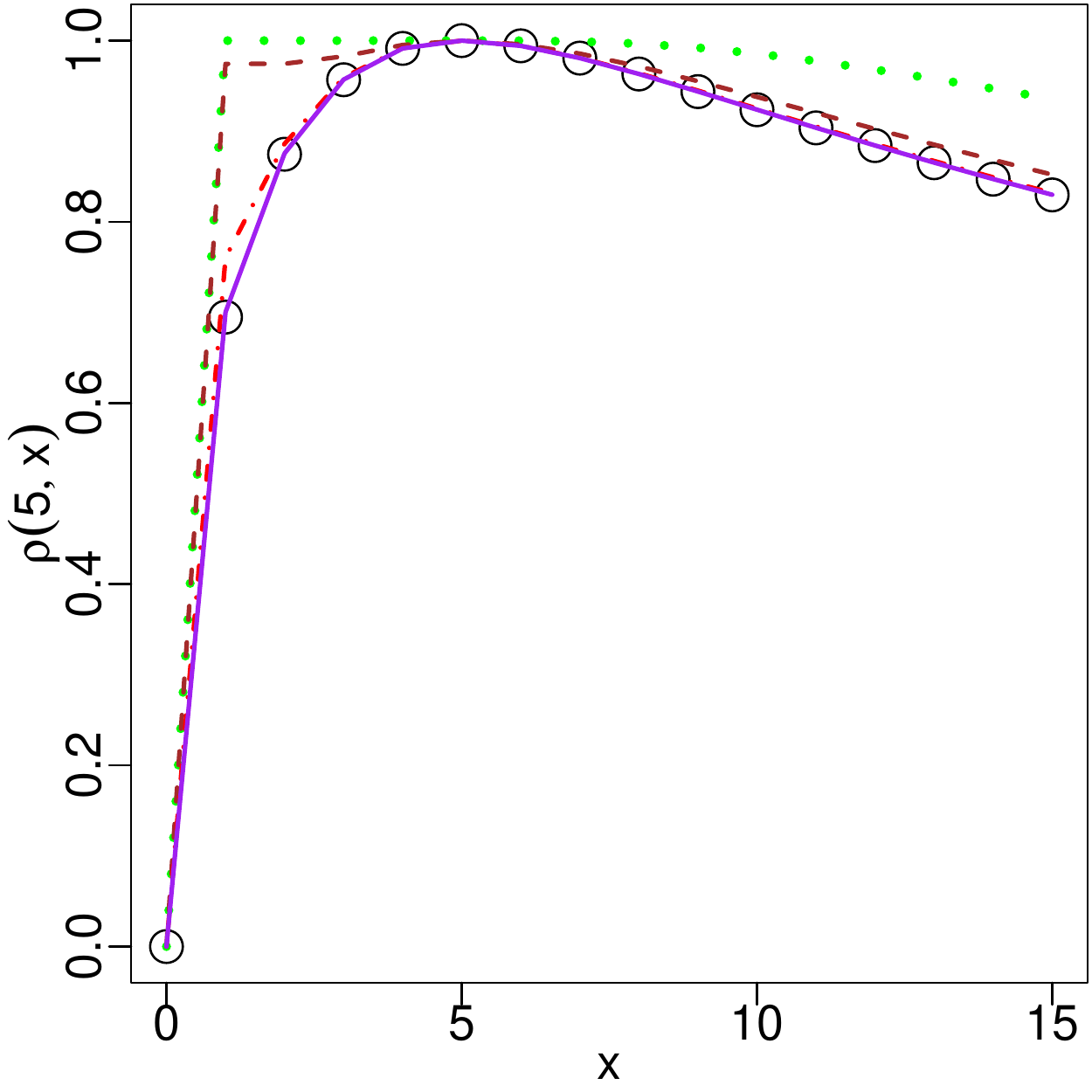}
                  }
                 \subfigure[IWP $p=2$]{
      \includegraphics[width=0.23\textwidth]{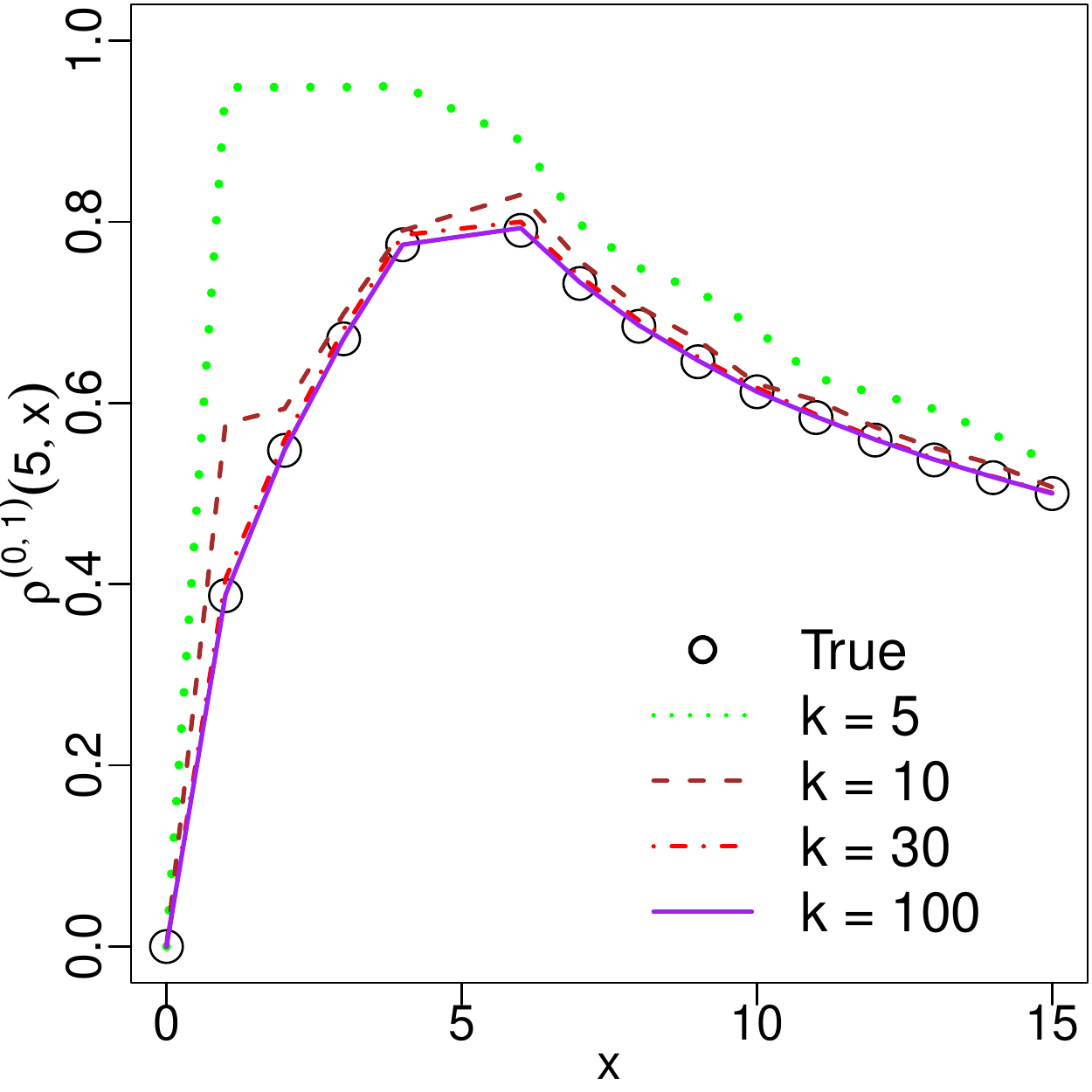}
      }%
      \subfigure[IWP $p=3$]{
      \includegraphics[width=0.23\textwidth]{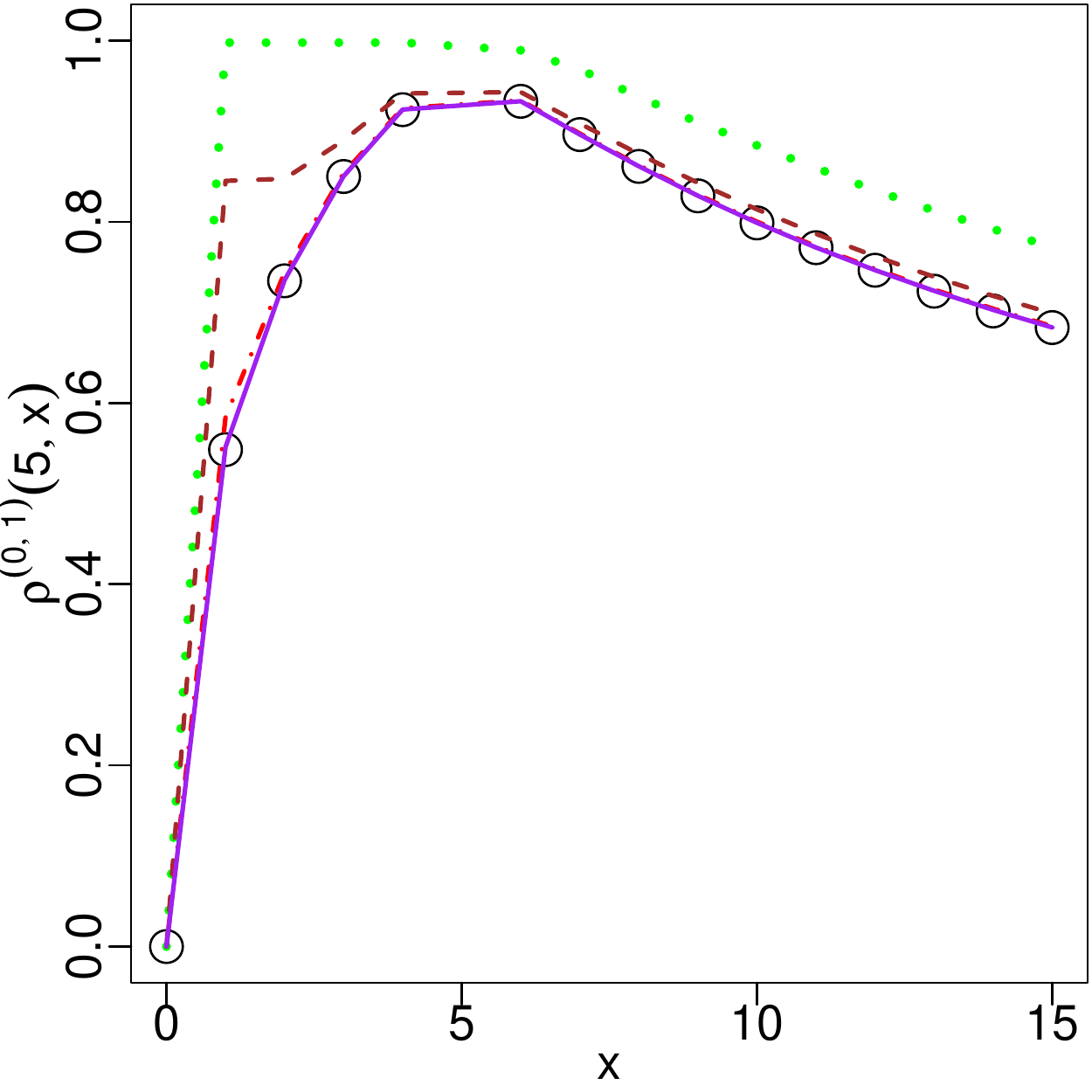}
      }%
      \subfigure[IWP $p=4$]{
      \includegraphics[width=0.23\textwidth]{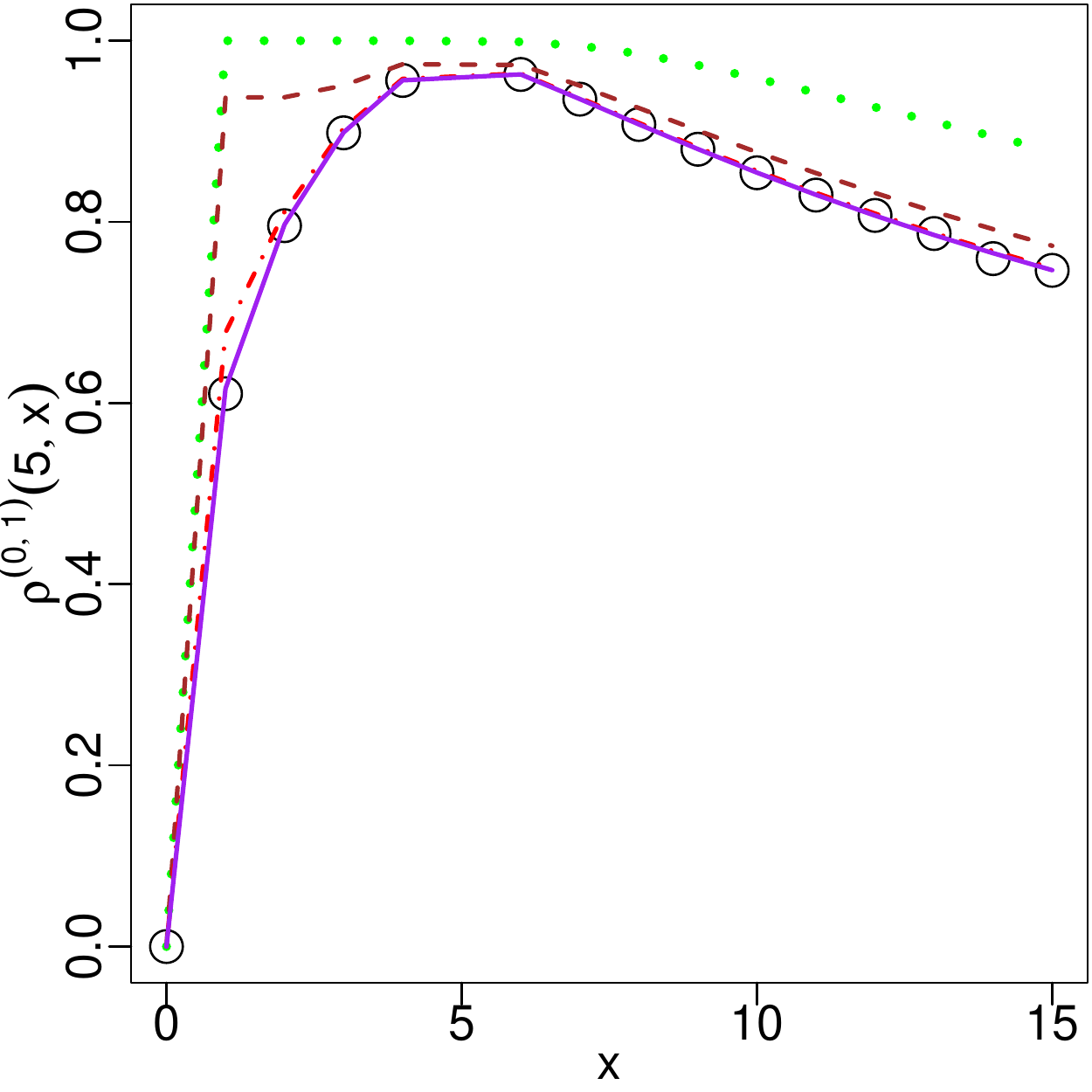}
      }%
      \subfigure[IWP $p=4$]{
      \includegraphics[width=0.23\textwidth]{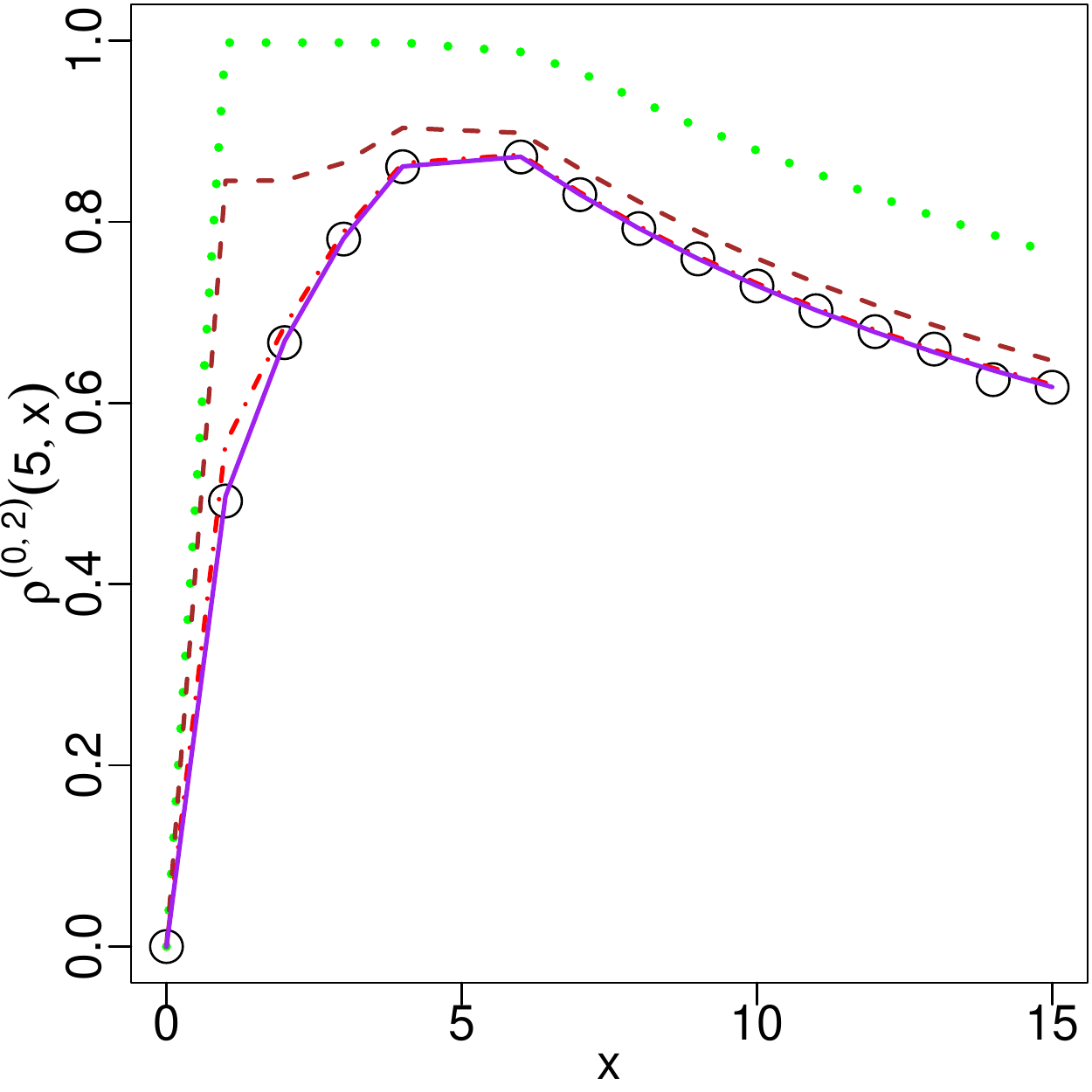}
      }

     \caption{Auto-correlation $\rho$ (a-d) and cross-correlation $\rho^{(0,q)}$ (e-h) functions of the IWP with its O spline approximation at $p = 1,2,3$ and $4$ for Section \ref{sec:sim1}. The plots (e-g) compute the correlation between the IWP with its first derivative, and the last plot in (h) computes the correlation between IWP with its second derivative.
     The auto-correlation and cross-correlation are accurately approximated when $k \geq 30$.}
    \label{fig:converg}
\end{figure}

\subsection{Computational Comparison with exact IWP}\label{sec:compareOSIWP}

In this section, we will illustrate the computational feasibility of the proposed O-spline approximation by comparing its implementation with that of the exact IWP model using the approach of augmented space as described in \cite{gmrfbook} and \cite{robinson2010continuous}. 
To ensure the comparability of the results, both approaches will be fitted using the approximation Bayesian inference method as described in \cref{sec:comp}.

As a proof of concept, we assume $\boldsymbol{x} = \{x_i\}_{i=1}^n$ were equally placed over $\Omega = [0,20]$, and $p=3$. The $n$ observations were simulated from the following simple univariate regression model:
\begin{equation}\label{equ:sim42mod}
\begin{aligned}
    y_i &= g(x_i) + \epsilon_i; \quad  x_i \in \Omega; \\
    \epsilon_i &\sim N(0,1) ;\quad g(x) = \sqrt{3}\sin(x/2).
\end{aligned}
\end{equation}

For the inference of $g$, we consider both the exact $\text{IWP}_3(\sigma)$ and its O-spline approximation.
The five-unit PSD $\sigma(5)$ was given Exponential prior with $\text{P}(\sigma(5)>3) = 0.01$.
The number of knots $k$ used in the O-spline approximations are respectively $10,30,50$ and $100$, equally placed over $\Omega$.
All computations in \cref{sec:comp} were done single-threaded with the number of adaptive quadrature being $10$ and the number of posterior samples being $3,000$.

To compare the numerical stability of each method, we compute the condition number 
of $\whess{\varparam}$ in \cref{sec:comp} defined as
\begin{equation}\label{equ:maxCN}
    \begin{aligned}
    \kappa[\whess{\varparam}] = \frac{\lambda_{\text{max}}[\whess{\varparam}]}{\lambda_{\text{min}}[\whess{\varparam}]},
    \end{aligned}
\end{equation}
where $\lambda_{\text{max}}$ and $\lambda_{\text{min}}$ denote the largest and smallest singular value respectively.
We then compute the maximum condition number $\kappa_{max}$ of each method defined as 
\begin{equation}\label{equ:maxCN}
    \begin{aligned}
    \kappa_{\text{max}} = \text{max} \{\kappa[\whess{\varparam_j}]: j \in [10]\},
    \end{aligned}
\end{equation}
where $\{\varparam_j : j \in [10]\}$ denotes the set of 10 quadrature points.
Table \ref{table:Simtable} displays the maximum condition number of each method for different $n$. Note that when $n \geq 800$, the implementation exact IWP method fails due to serious numerical singularity problems. However, the proposed O-spline approximations has no numerical problems.

We also compared the average runtime for each method.
For each choice of $n$, each model will be fitted independently for $10$ times, and the relative runtime is compared with the average runtime of the O-spline approximation with $k=10$ knots and $50$ observations.
Table \ref{table:Simtable} shows the mean and standard deviation of the 10 relative runtimes for each model. The proposed O-spline approximation works much faster than the implementation using the exact IWP method, even when a large of number of knots ($k \geq n$) is used in the approximation. 


Finally, to illustrate the approximation quality using the proposed O-spline approach, we compared the inferential results obtained from O-spline approximations and the exact IWP method when $n=100$, for both the function and its first two derivatives.
As shown in the Figure \ref{fig:simCompareIWPOS}, the inferential results between the exact IWP method and its O-spline approximation are very similar in posterior mean for both the function $g$ and its derivatives, even with only $k = 10$. The posterior standard deviations for the higher order derivative have some degree of inconsistency when $k$ is small, but it gets much smaller as $k$ increases to around $30$.

In summary, the proposed method is found to yield indistinguishable inferential results compared to the exact IWP method, but with a significantly shorter runtime and better numerical stability when the number of locations of interest is large.

\begin{table}
  \begin{center}
  \scalebox{0.58}{
  \begin{tabular}
  {|p{1cm}|p{2.5cm}|p{1.90cm}|p{2.5cm}|p{1.90cm}|p{2.5cm}|p{1.90cm}|p{2.5cm}|p{1.90cm}|p{2.5cm}|p{1.90cm}|}
   \hline
     \cline{1-11}
        \multicolumn{1}{c}{} &
        \multicolumn{2}{c}{Exact} &
        \multicolumn{2}{c}{$\text{OS}_{k=10}$} &
        \multicolumn{2}{c}{$\text{OS}_{k=30}$} &
        \multicolumn{2}{c}{$\text{OS}_{k=50}$} &
        \multicolumn{2}{c}{$\text{OS}_{k=100}$} \\
        \hline
        $n$ &  {Rel Runtime} & {CN $(\log_{10})$} & {Rel Runtime} & {CN $(\log_{10})$} & {Rel Runtime} & {CN $(\log_{10})$} & {Rel Runtime} & {CN $(\log_{10})$} & {Rel Runtime} & {CN $(\log_{10})$}\\
           \hline
         50 & 3.29(0.06) & 12.52 & 1.00(0.05) & 6.35 & 1.10(0.06) & 6.10 & 1.31(0.33) & 6.03 & 1.65(0.30) & 5.98 \\
         100 & 5.47(0.33) & 14.13 & 1.11(0.03) & 6.47 & 1.23(0.01) & 6.20 & 1.37(0.05) & 6.12 & 1.86(0.04) & 6.10 \\
         200 & 10.40(0.62) & 16.05 & 1.83(0.38) & 6.98 & 1.83(0.05) & 6.76 & 2.02(0.03) & 6.96 & 2.82(0.06) & 7.23 \\
         500 & 41.44(1.21) & 17.34 & 5.80(0.34) & 7.19 & 6.08(0.21) & 7.08 & 6.69(0.35) & 7.30 & 8.02(0.24) & 7.56 \\
         800 & --(--) & -- & 14.53(0.17) & 7.29 & 15.07(0.28) & 7.24 & 15.76(0.08) & 7.46 & 18.15(0.24) & 7.73 \\
         2000 & --(--) & -- & 111.25(1.68) & 7.49 & 112.00(1.16) & 7.68 & 113.47(0.90) & 7.86 & 120.68(1.38) & 8.20 \\
         5000 & --(--) & -- & 1000.72(25.03) & 7.81 & 996.54(8.13) & 7.90 & 999.43(5.04) & 8.09 & 1010.83(6.77) & 8.43 \\
  \hline
  \end{tabular}
  }
  \end{center}
  \caption{The left column (Rel Runtime) shows the mean (std.dev) of relative runtimes for 10 runs of the exact IWP model and of each of the O-spline approximation, as described in \cref{sec:compareOSIWP}. 
  The relative runtimes are computed by the dividing the average runtimes of O-spline with $k=10, n = 50$.
  The right column (CN) shows the $\kappa_{\text{max}}$ at $\log_{10}$ scale.
  For $n\geq 800$, the exact method fails due to numerical problem, whereas the O-spline method is unaffected. The O-spline method has both faster runtime and better numerical stability regardless of $k$, and the difference gets bigger as $n$ increases.
  }
  \label{table:Simtable}
\end{table}

\begin{figure}[!p]
    \centering
                   \subfigure[$\text{SD}\lbrack g(x)|y \rbrack$]{
      \includegraphics[width=0.3\textwidth]{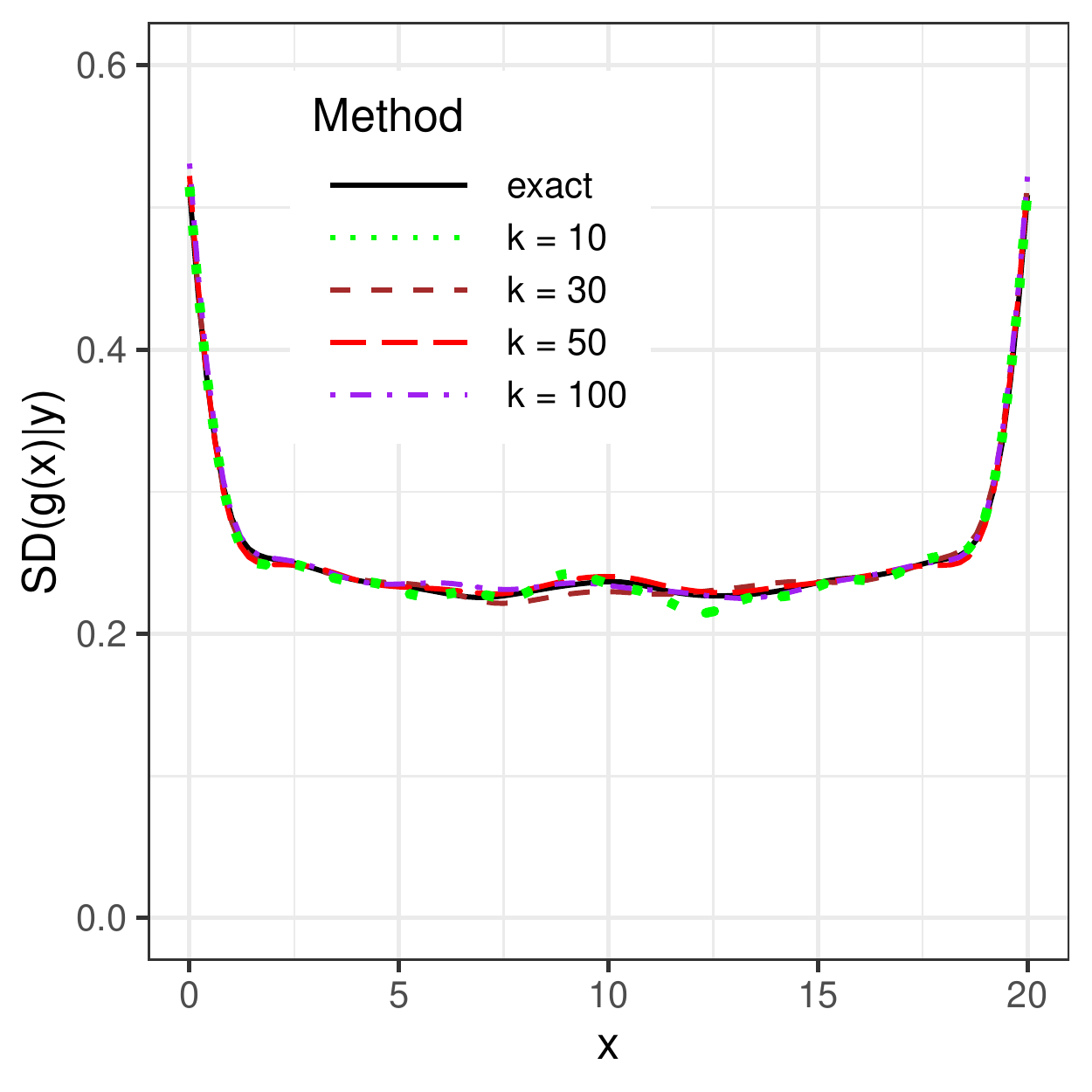}
    }
              \subfigure[$\text{SD}\lbrack g'(x)|y \rbrack$]{
      \includegraphics[width=0.3\textwidth]{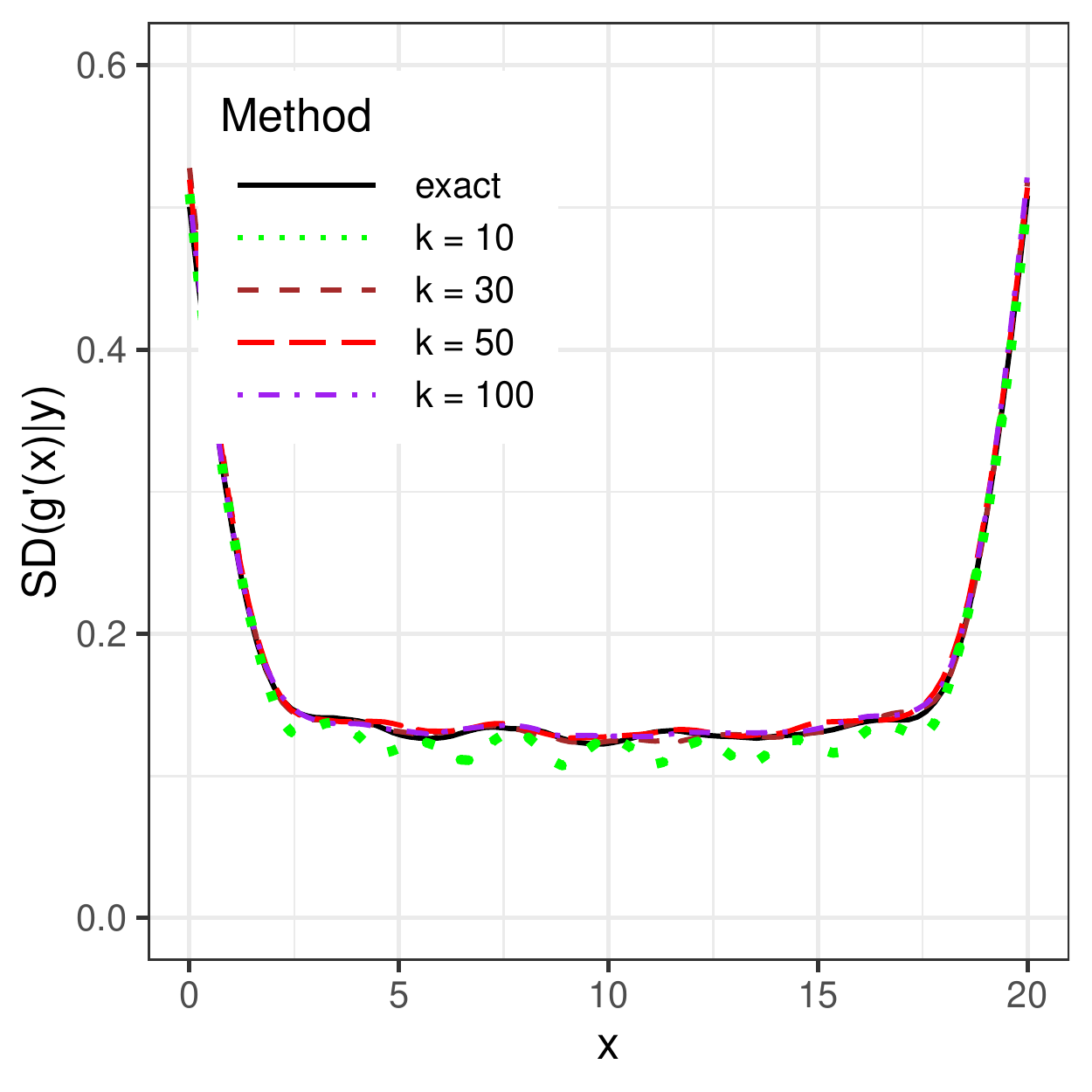}
    }
               \subfigure[$\text{SD} \lbrack g''(x)|y \rbrack$]{
      \includegraphics[width=0.3\textwidth]{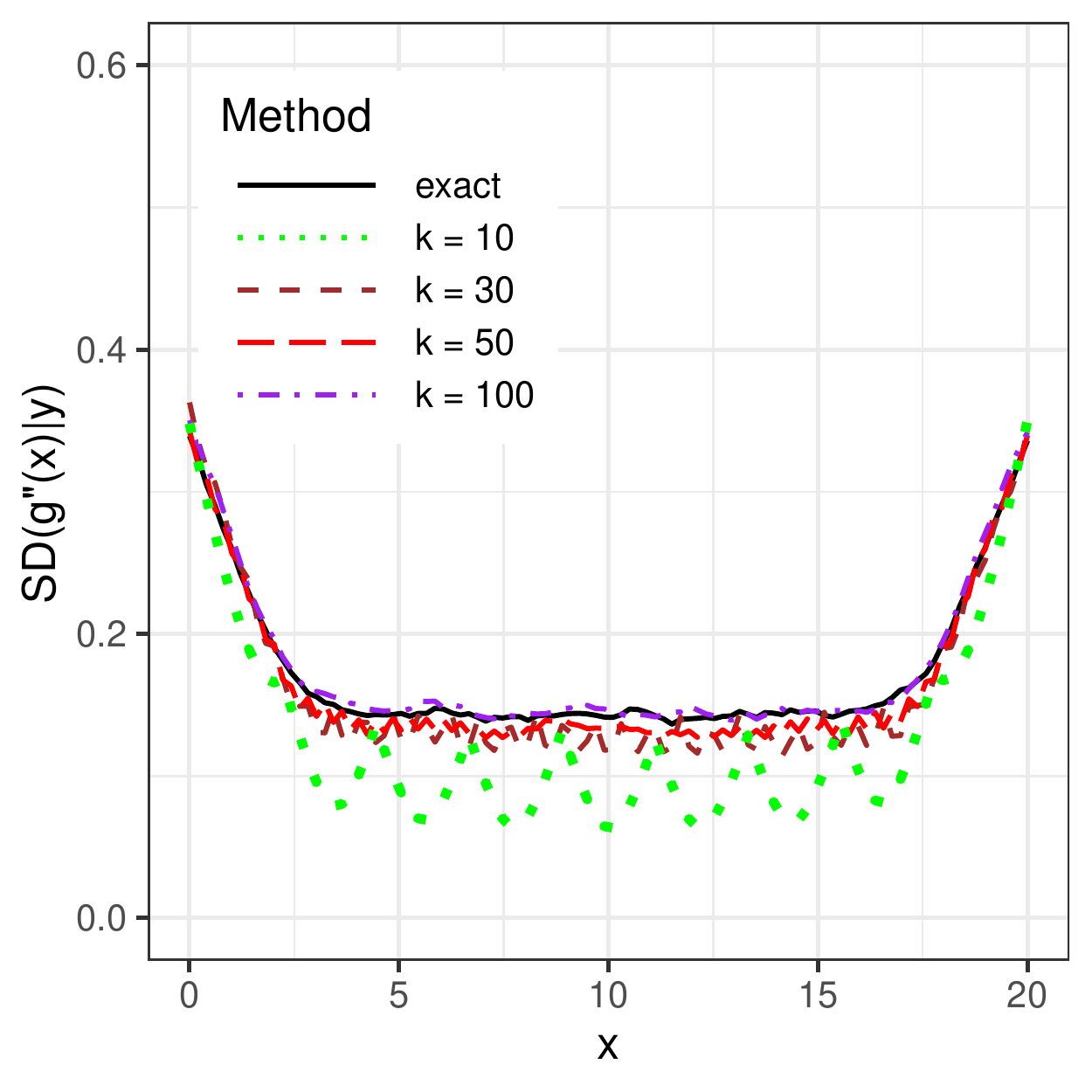}
    }
             \subfigure[$\E \lbrack g(x)|y \rbrack $]{
      \includegraphics[width=0.3\textwidth]{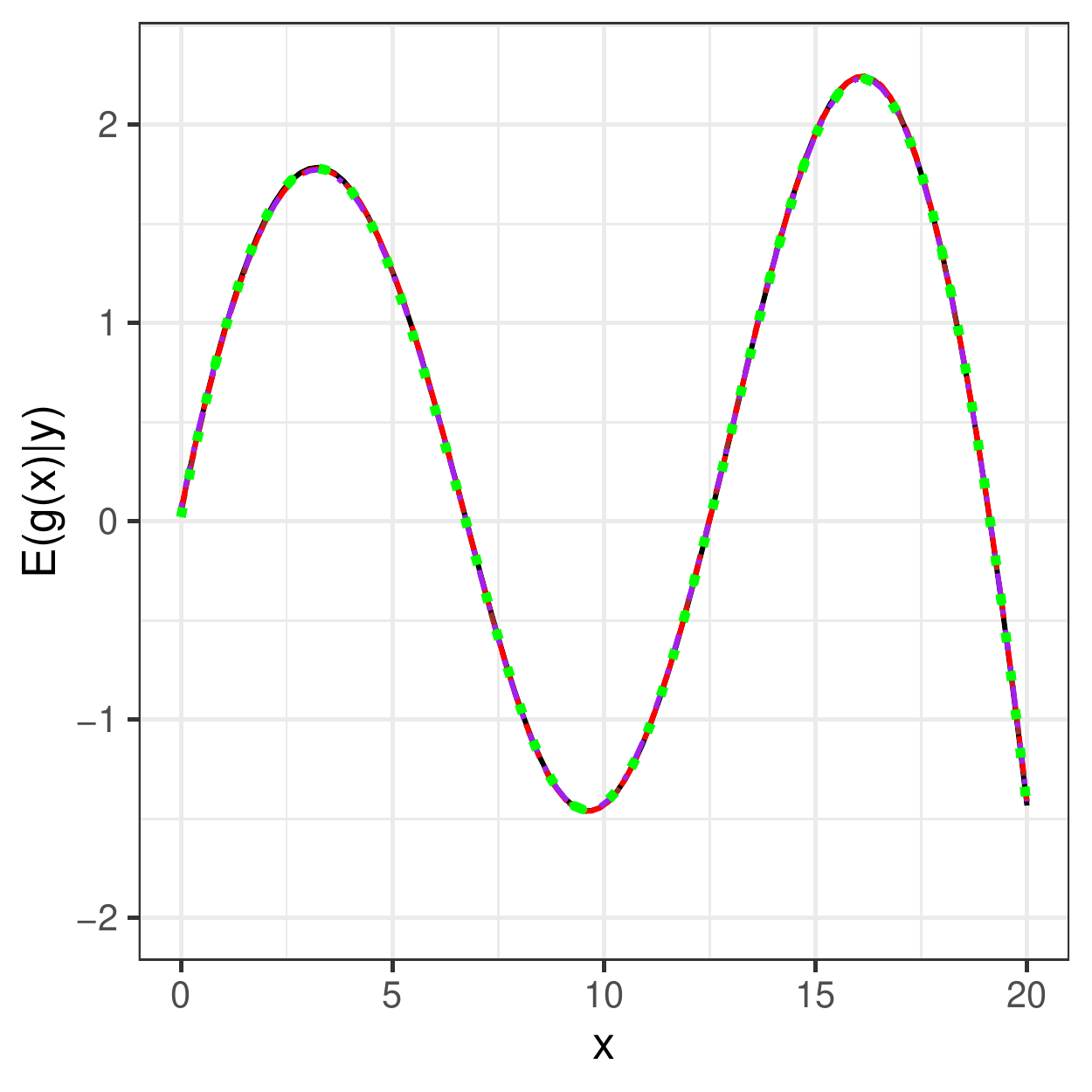}
    }
              \subfigure[$\E \lbrack g'(x)|y \rbrack $]{
      \includegraphics[width=0.3\textwidth]{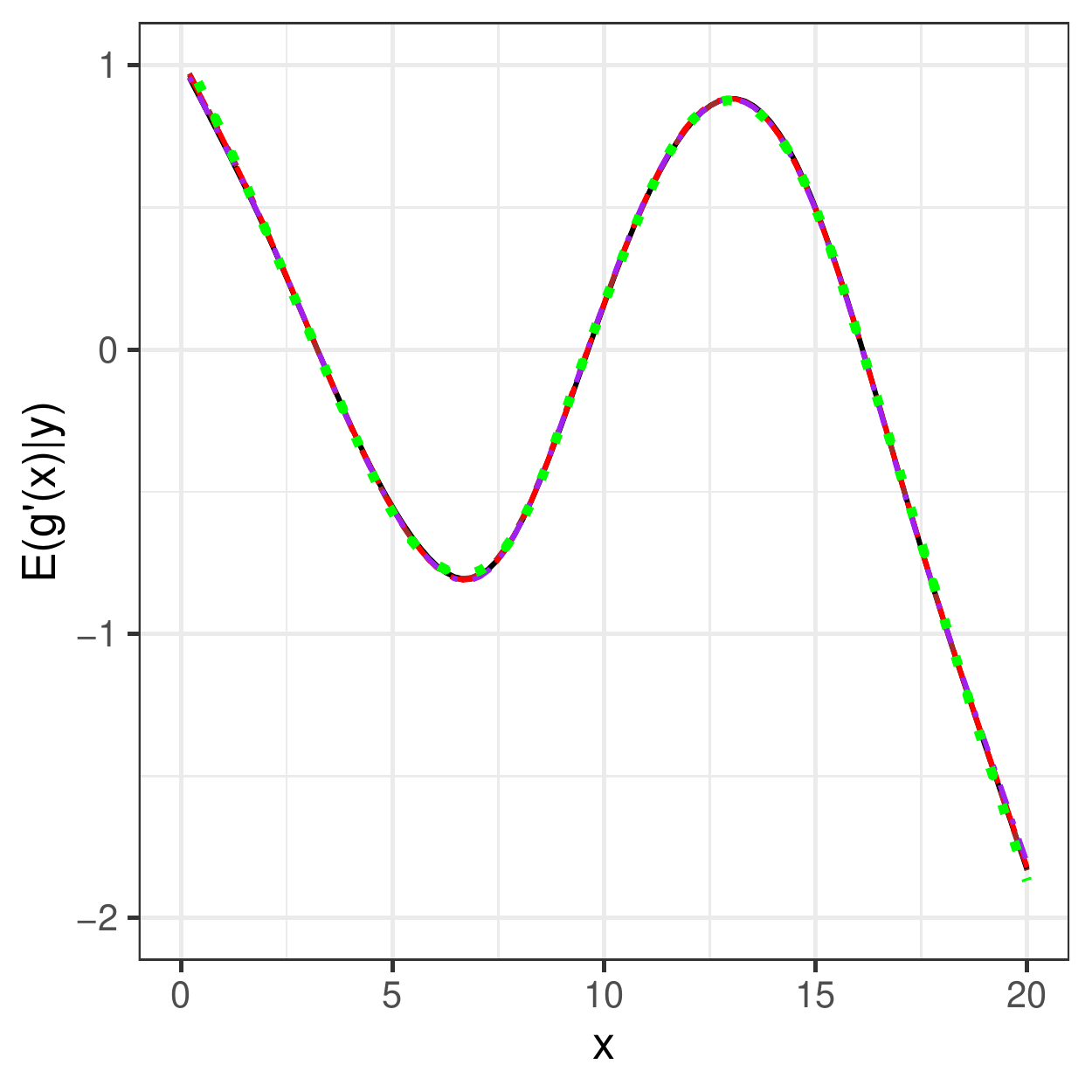}
    }
               \subfigure[$\E \lbrack g''(x)|y \rbrack $]{
      \includegraphics[width=0.3\textwidth]{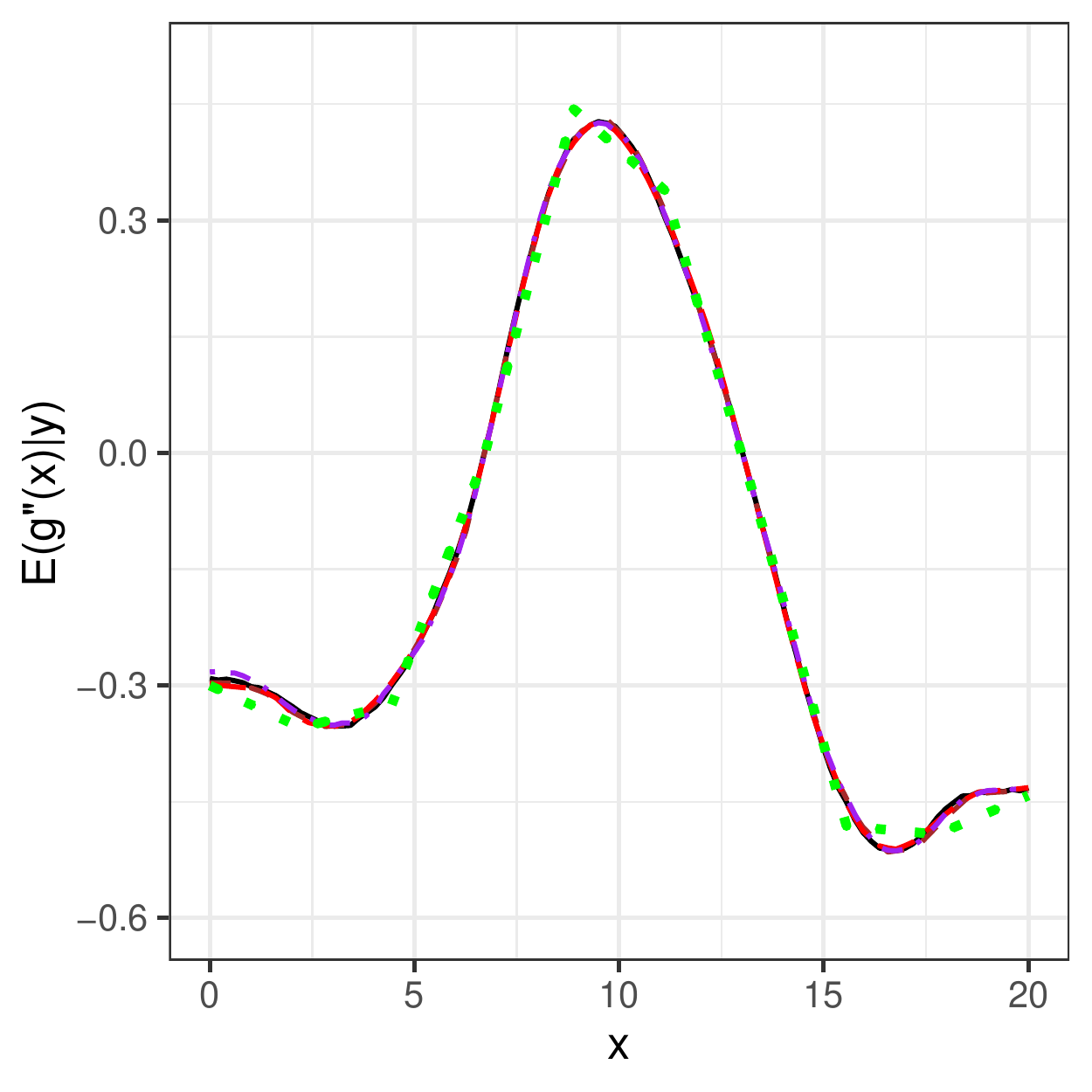}
    }
     \caption{Comparing the posterior mean and standard deviation from the exact IWP method and its O-spline approximations, as described in \cref{sec:compareOSIWP} when $n = 100$, for both the function and its derivatives. The O-spline approximation is very accurate for both the posterior mean and the standard deviation when $k \geq 30$.}
    \label{fig:simCompareIWPOS}
\end{figure}

\subsection{Assessment of inferential accuracy}\label{sec:sim2}

In this section, we illustrate that the higher-order O-spline method can provide inferential results that are significantly improved compared with the existing methods based on second-order smoothness, especially for derivatives.

We consider the true function $g(x) = \sum_{i=1}^{3} \delta_i \phi(x-\mu_i)$ simulated with the form of mixture Gaussian density, where $\phi$ denotes the standard Gaussian density. The mixture weights $\{\delta_1, \delta_2, \delta_3 \}$ are set to $\{0.6, 0.3, 0.1\}$ and the mean values $\{\mu_i; i \in [3] \}$ are independently simulated from $N(5, 4)$ in each replication. The observation locations $\boldsymbol{x} = \{x_1, ..., x_n\}$ are equally located over $[0,10]$ with $n = 100$. The simulated true function will then be standardized so that the sample variance of $g(\boldsymbol{x})$ equals to one in each replication.
Each observation is simulated as $y_i = g(x_i) + e_i$, where $e_i \overset{iid}{\sim} N(0, 1/100)$, making a signal-to-noise ratio of 100. 

For comparison, we fit the continuous RW2 method by \cite{rw208} which is derived as an approximation to IWP$_2(\sigma)$, and the proposed O-spline approximation to IWP$_3(\sigma)$. All the methods are fitted with 100 equally placed knots, and implemented using the approximate Bayesian inference method in \cref{sec:comp}.
We adopt the approach in \cref{sec:intepret_var} with an Exponential prior on the one-unit PSD $\sigma(1)$ such that $\text{P}(\sigma(1) > 1) = 0.5$, which is then scaled to the corresponding Exponential prior for $\sigma$ for $p=3$ in the O-spline implementation and for $p=2$ in the RW2 implementation. 


The rMSE between the posterior mean functions and the true function is computed as $\text{rMSE}(g) = \sqrt{{\sum_{i=1}^n \{\mathbb{E}[g(x_i)|\boldsymbol{y}] - g(x_i)\}^2}/{n}}$, and similarly for the first and second-order derivatives. 
To compare the inferential accuracy between different methods, we then scale the rMSE of each method by the median rMSE of the O-spline method.
We repeat the above procedure for $300$ independent replications. 


The results are summarised in Figure \ref{fig:sim_GMM_agg}. It shows that for such smooth true functions with a few peaks and valleys in their derivatives, the O-spline method with third-order smoothness indeed performs better than the existing method based on IWP-$2$. The difference is most obvious for the inference of the second derivative $g''(\boldsymbol{x})$ where the second-order method yields overly wiggly estimate. The main reason behind this is that the proposed O-spline method with third-order smoothness assumes the sample path from the target process to be twice continuously differentiable, whereas the second-order method RW2 assumes the sample path from the target process to be only once continuously differentiable.

\begin{figure}[!p]
    \centering
             \subfigure[rMSE ratio for $g$]{
      \includegraphics[width=0.31\textwidth]{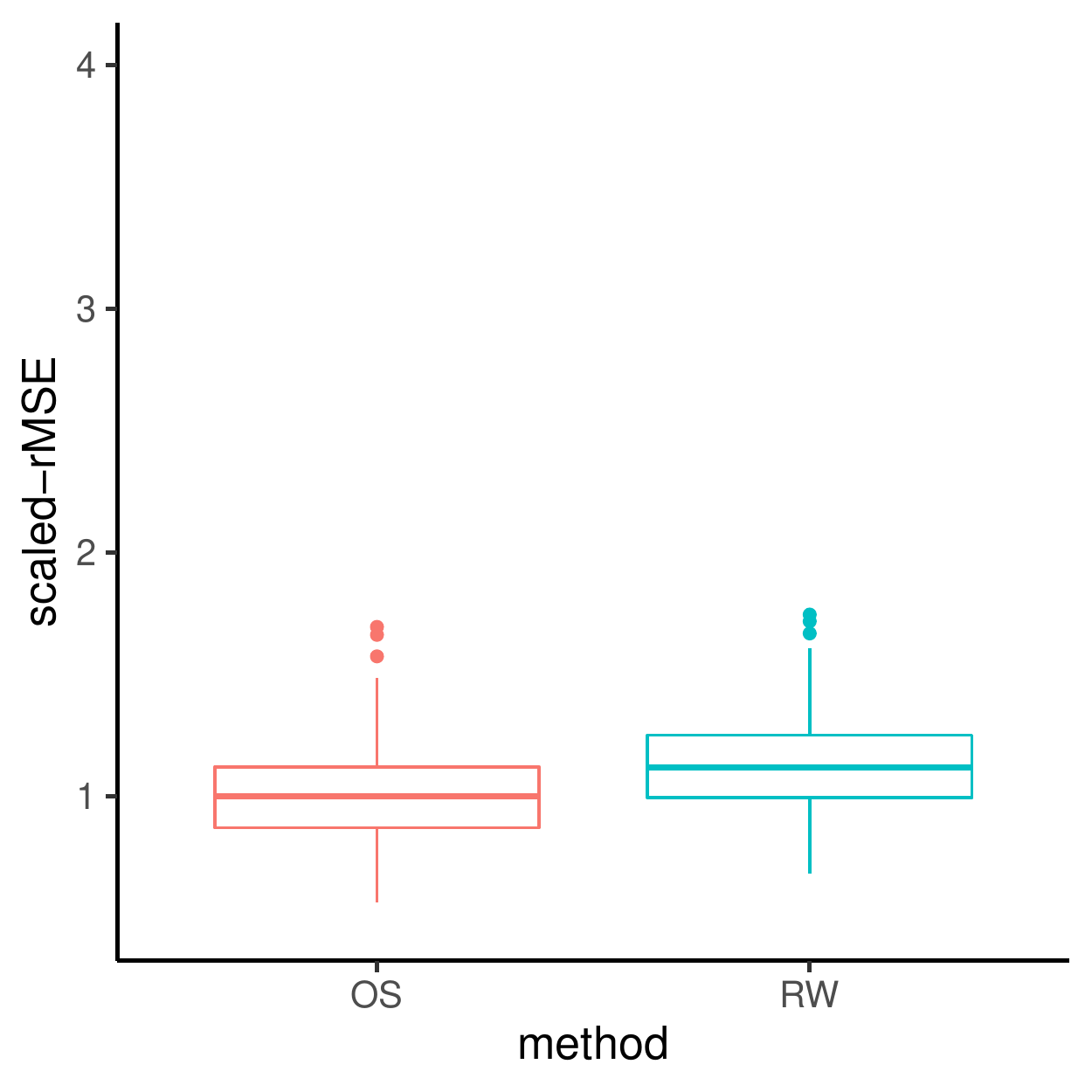}
    }
             \subfigure[rMSE ratio for $g'$]{
      \includegraphics[width=0.31\textwidth]{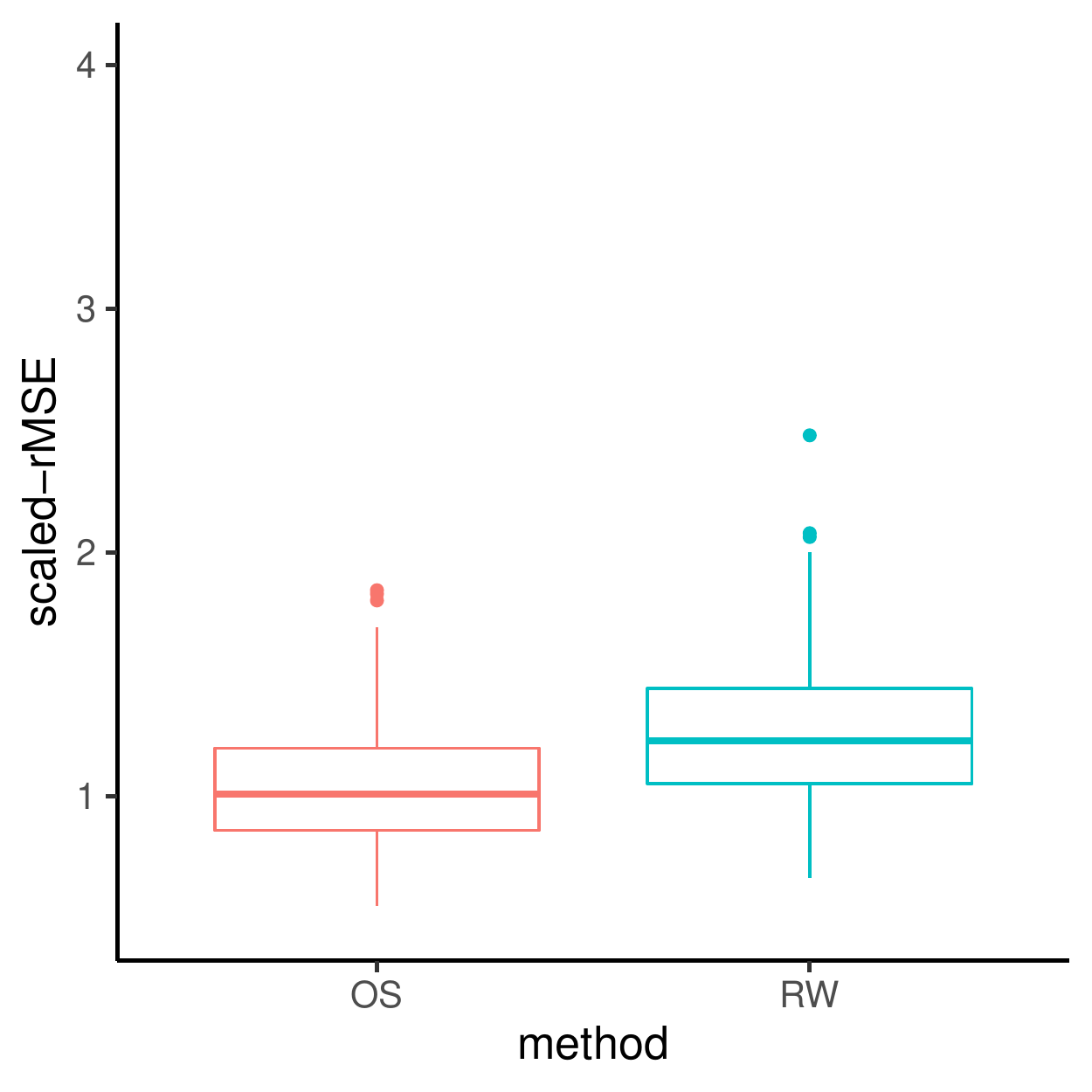}
    }
              \subfigure[rMSE ratio for $g''$]{
      \includegraphics[width=0.31\textwidth]{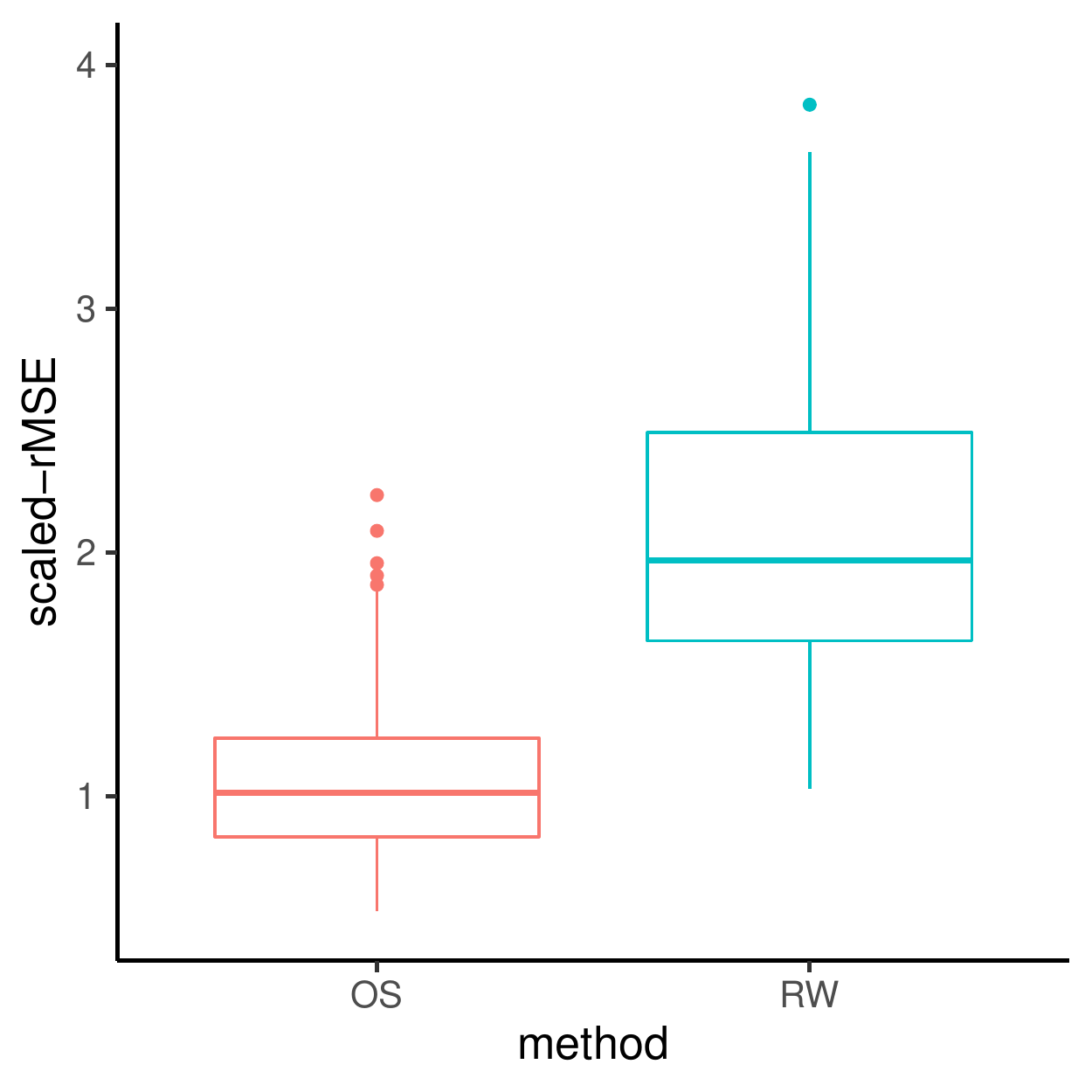}
    }
         \subfigure[Posterior mean for $g$]{
      \includegraphics[width=0.31\textwidth]{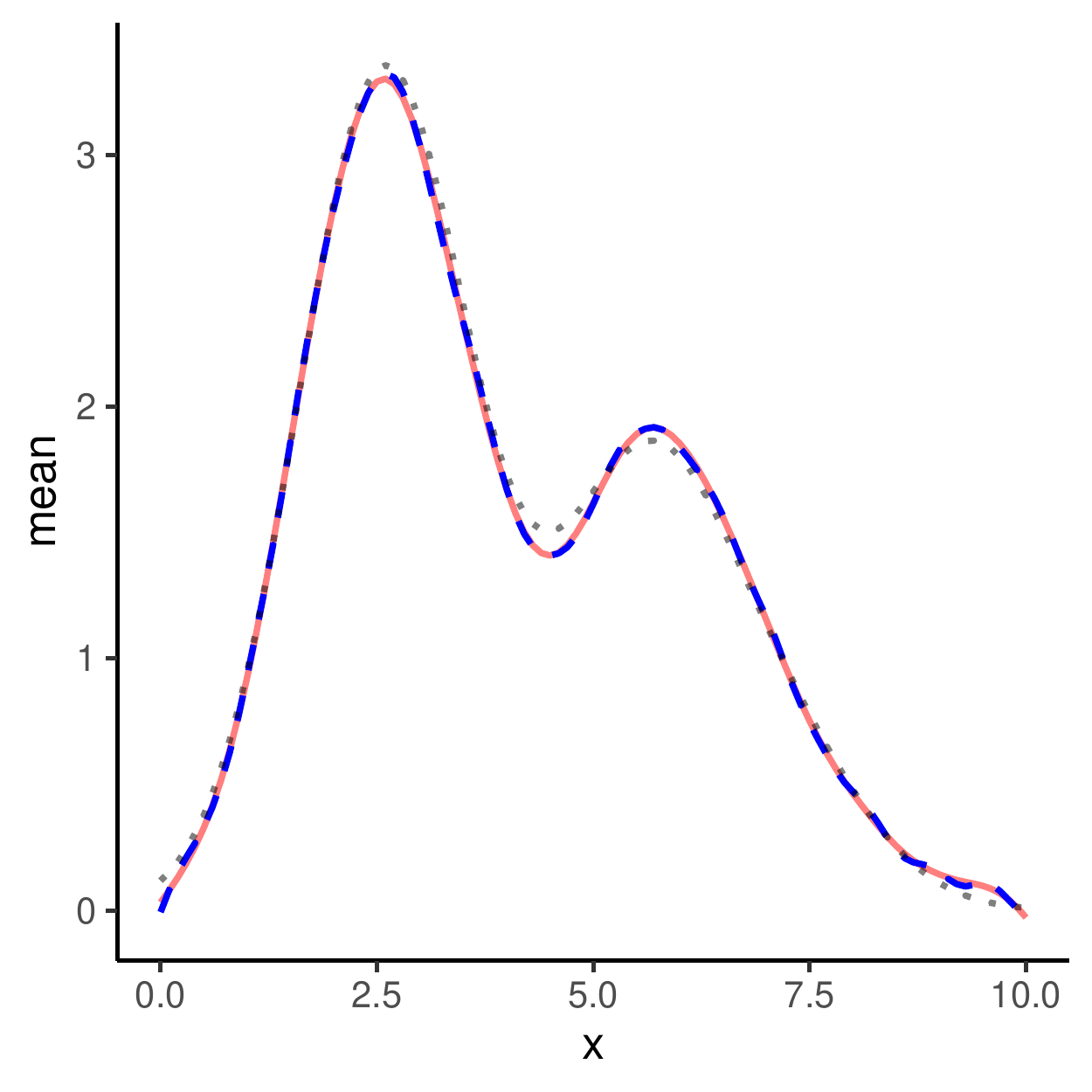}
    }
             \subfigure[Posterior mean for $g'$]{
      \includegraphics[width=0.31\textwidth]{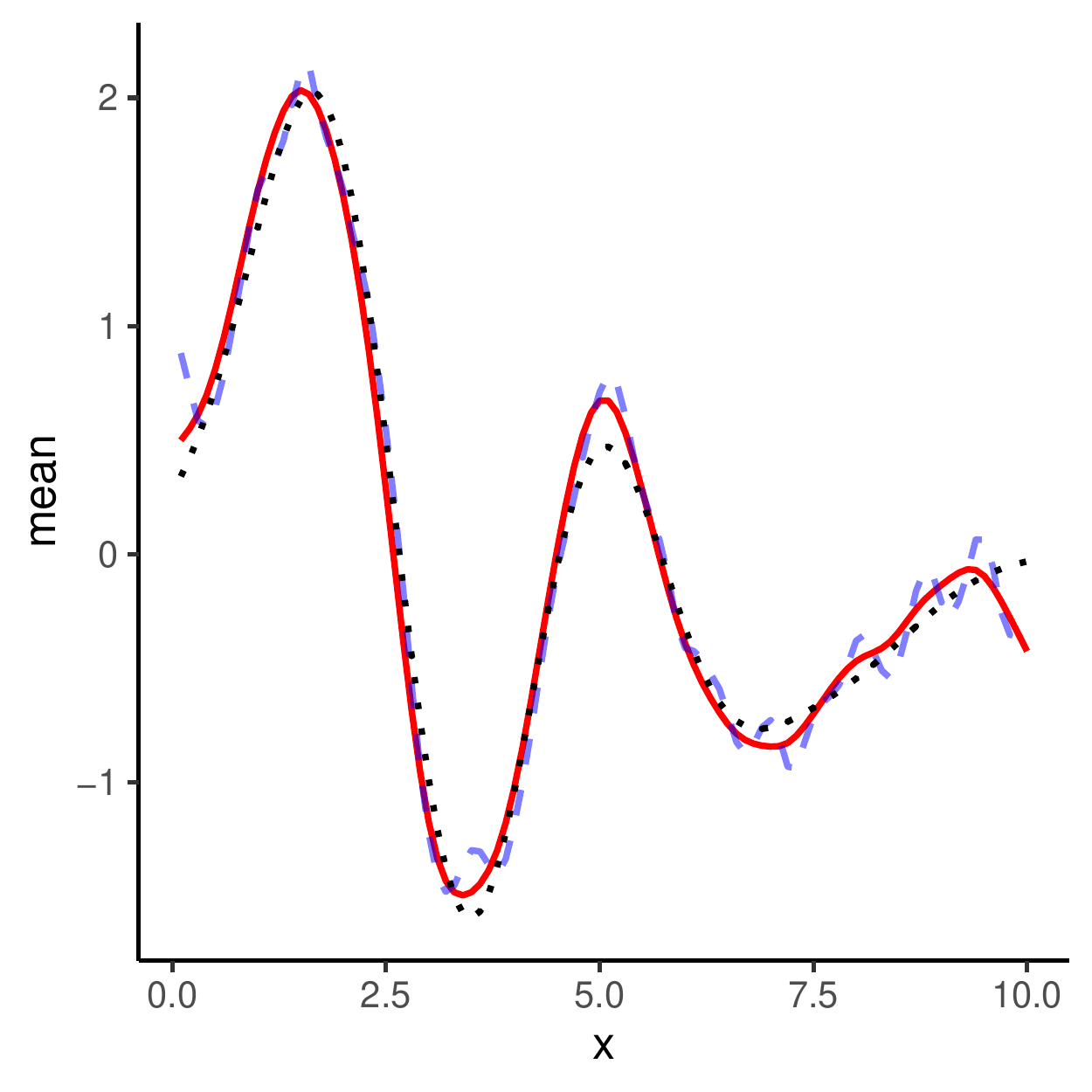}
    }
                 \subfigure[Posterior mean for $g''$]{
      \includegraphics[width=0.31\textwidth]{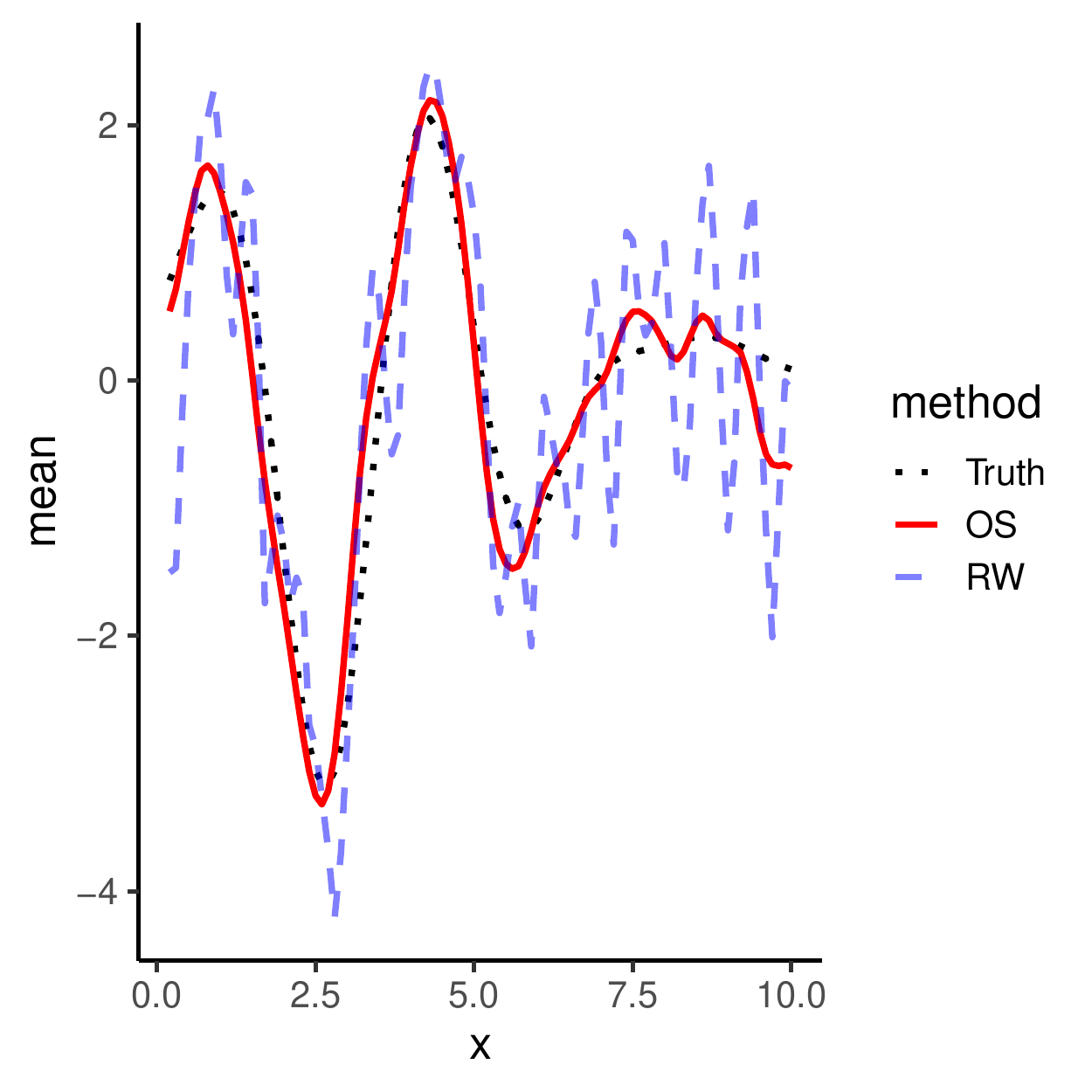}
    }
     \caption{Simulation in \cref{sec:sim2}. Figures (a)-(c) show the boxplots of rMSE scaled by the median rMSE of the O-spline method. Figures (d)-(f) show the posterior means for $g$ and its derivatives. The O-spline method is shown in red; The RW2 method is shown in blue. Although the inferential qualities for $g$ are similar, the O-spline method provides more accurate inference for the derivatives.}
    \label{fig:sim_GMM_agg}
\end{figure}


\section{Example}\label{sec:examples}

In this section, we use the O-spline method to analyze the COVID-19 daily death rates in Canada, Denmark, South Africa and South Korea from 2020-01-23 to 2022-04-26.
In this example, the model-based inference for derivatives of death rates has practical meanings to answer questions such as when was COVID death rate growing at its fastest rate, or whether COVID death rate is slowing down or speeding up. 
The data is obtained from COVID-19 Data Repository by the Center for Systems Science and Engineering (CSSE) at Johns Hopkins University \citep{dong2020interactive}. The raw data for each country is displayed in the online supplement.

Let $y_t$ denote the daily new deaths at time $t$, where $t$ denotes the time in days from 2020-03-01 up to 2022-04-26. We consider a Poisson regression model:
\begin{equation}\label{covid_mod}
    \begin{aligned}
    y_i &\sim \text{Poisson}(\exp(\eta_i)),\\
    \eta_{i} &= \boldsymbol{v}_i^T \boldsymbol{\beta} + g(x_i) + \epsilon_{i}, \\
    g(x) &\sim \text{IWP}_3(\sigma), \quad \epsilon_i \overset{iid}{\sim} N(0,\phi^2).
    \end{aligned}
\end{equation}

The model contains linear fixed effect $\boldsymbol{v}_i = (v_{i1}, ..., v_{i6})^T $ for the variable weekdays, a smoothing effect over the time variable $x$ through the unknown smooth function $g(x)$, and an observation-level random effect $\epsilon_i$ to accommodate the potential over-dispersion.
The weekdays variable $\boldsymbol{v}_i$ is coded such that $g(0)$ is interpreted as the average weekday effect. Therefore, $\boldsymbol{\beta}$ represents the additional weekday effects of Monday to Saturday relative to the average effect, and the additional weekday effect on Sunday is computed as $-\sum_{i=1}^6 \beta_i$.
Each of the linear fixed effects $\beta_i$ is given independent $N(0,100)$ prior. The over-dispersion parameter $\phi$ is modelled with an Exponential prior with a median $0.1$.

The unknown function $g(x)$ is modelled with an IWP$_3(\sigma)$ prior in \cref{eqn:expansion}, with $\gamma_l \overset{iid}{\sim} N(0,1/100)$ for $0\leq l < 3$.
We then approximate the IWP prior using our proposed O-spline method with $k = 100$ to balance the computational efficiency and the approximation accuracy. 
For the parameter $\sigma$, we assign an Exponential prior on $\sigma(h)$ with median $\log(2)$, where $h$ is taken to be 7 days. This prior can be interpreted as with a roughly 50 percent chance that the death rate could be scaled by $4$ or $1/4$ in a week.

The inferential results for the function $\exp [g(x)]$ which denotes the evolution of COVID death rate in each country after adjusting for the weekday effect and population size are shown in Figure \ref{fig:covidDeath}. 
Model-based inference is also done for the derivative of the COVID death rate $g'(x)\exp [g(x)]$ with adjusted results shown in Figure \ref{fig:covidDeriv}. 
It can be observed from Figure \ref{fig:covidDeriv} that COVID death rate was growing fastest in South Korea during the waves around March 2022, whereas in Canada the fastest time was between March 2020 and July 2020. 
For South Africa, the method suggests that the COVID death increased at fastest speed around the end of 2020. In Denmark, the three waves had similar speeds at their peaks. 

The posteriors of weekday effects in each country as well as the posteriors of the overdispersion and the PSD can be found in Figure \ref{fig:covidOther}. The inferential results for the log relative risk $g(x)$ and its derivative $g'(x)$ are also provided in the online supplement.

\begin{figure}[!p]
    \centering
             \subfigure[Canada]{
      \includegraphics[width=0.41\textwidth]{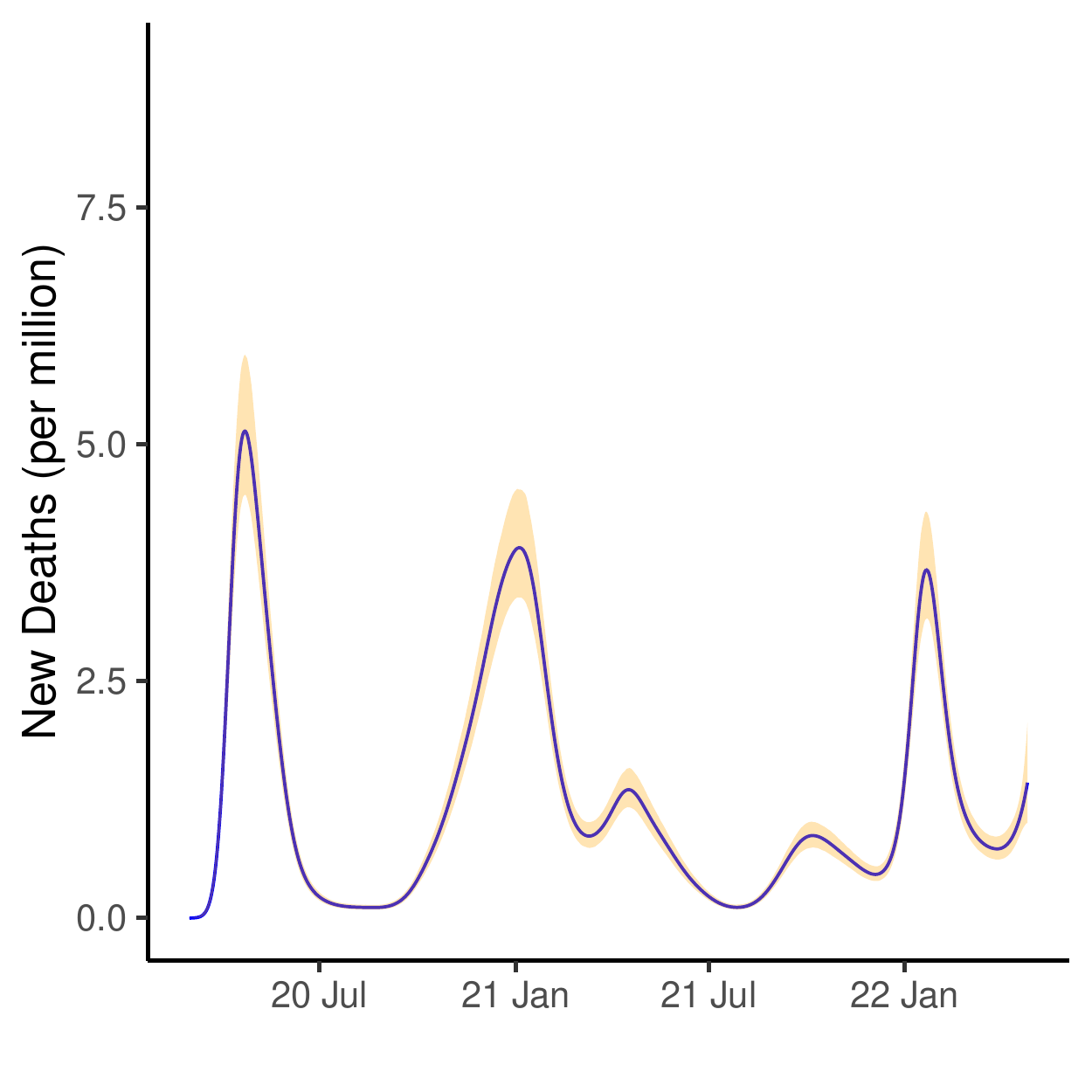}
    }
             \subfigure[Denmark]{
      \includegraphics[width=0.41\textwidth]{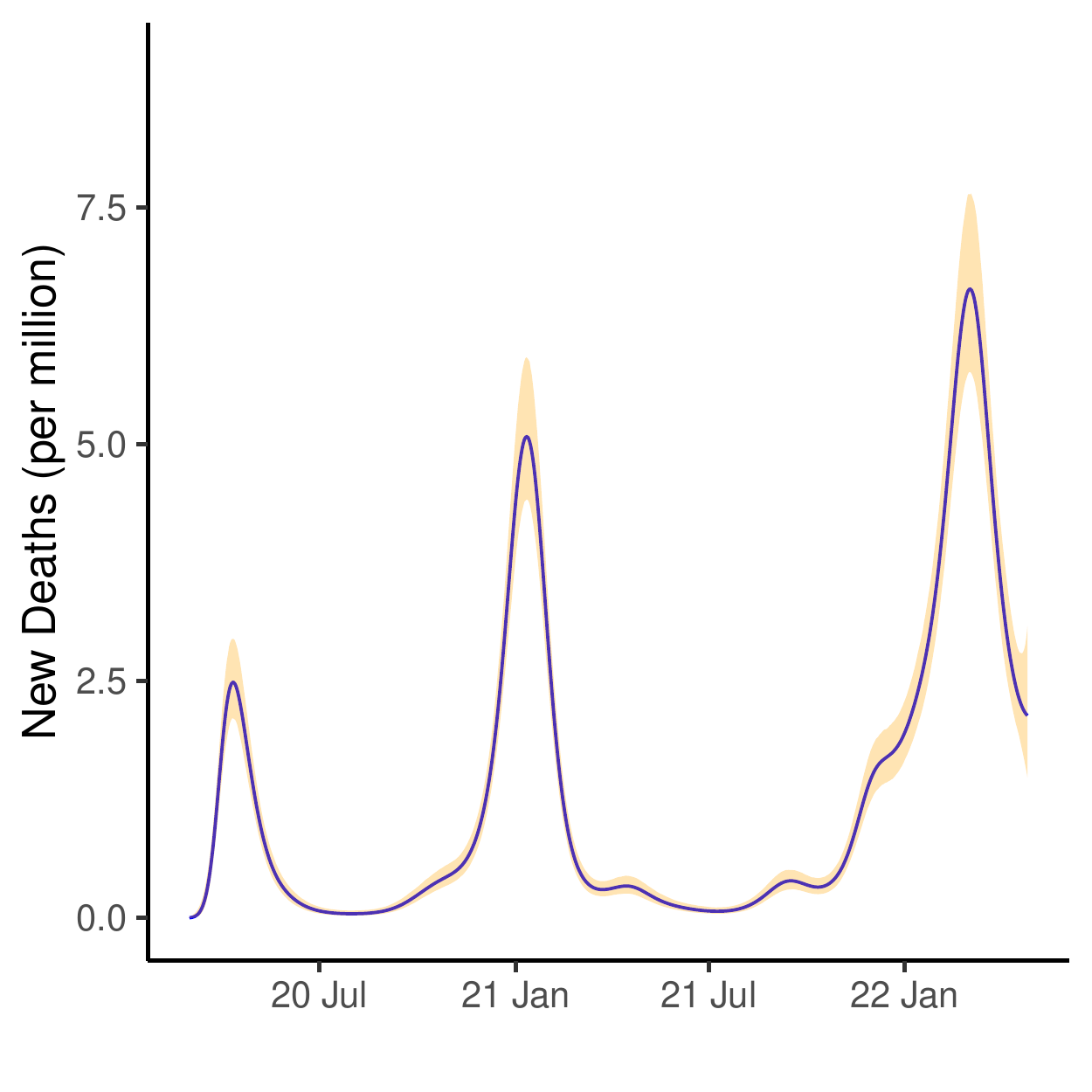}
    }
              \subfigure[South Africa]{
      \includegraphics[width=0.41\textwidth]{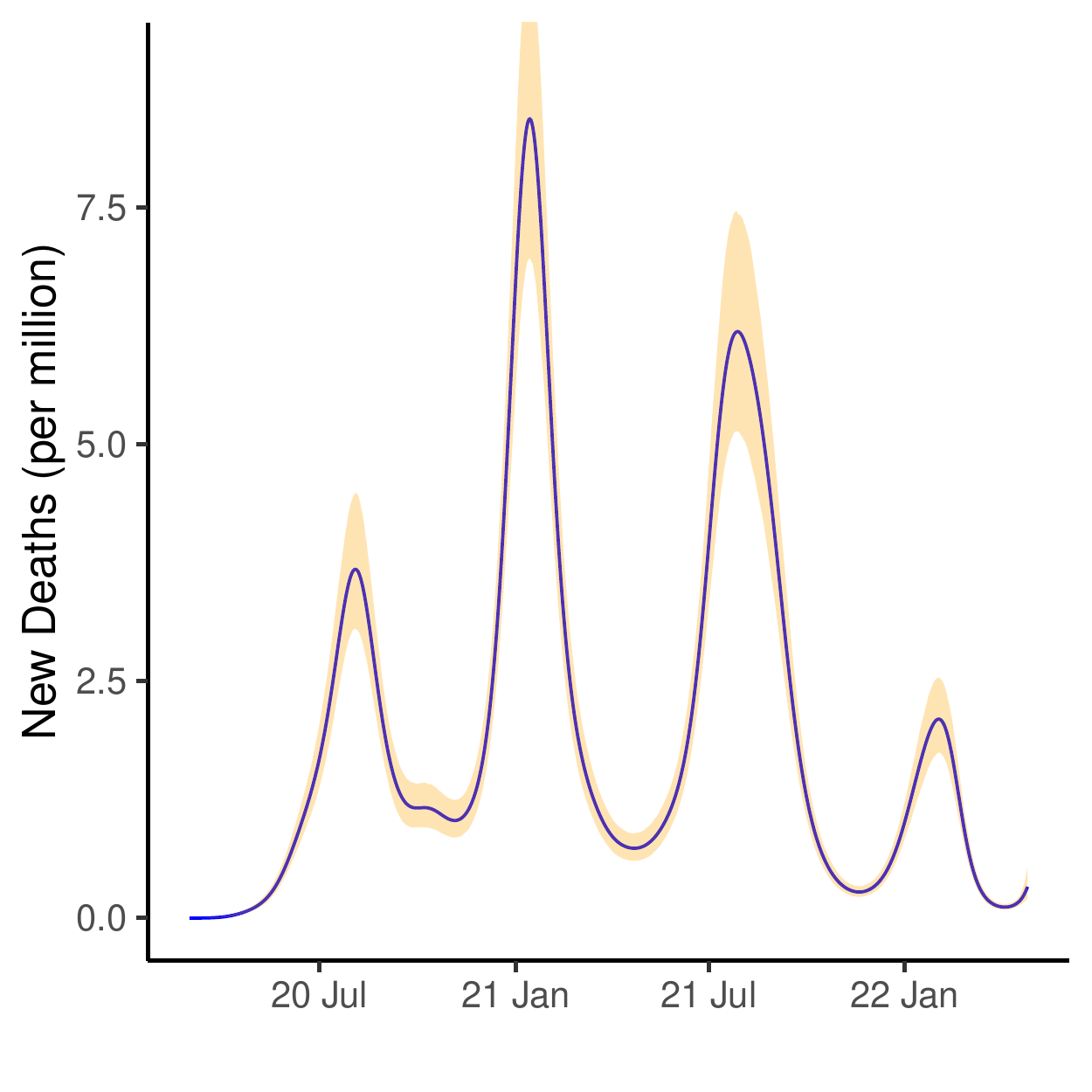}
    }
               \subfigure[South Korea]{
      \includegraphics[width=0.41\textwidth]{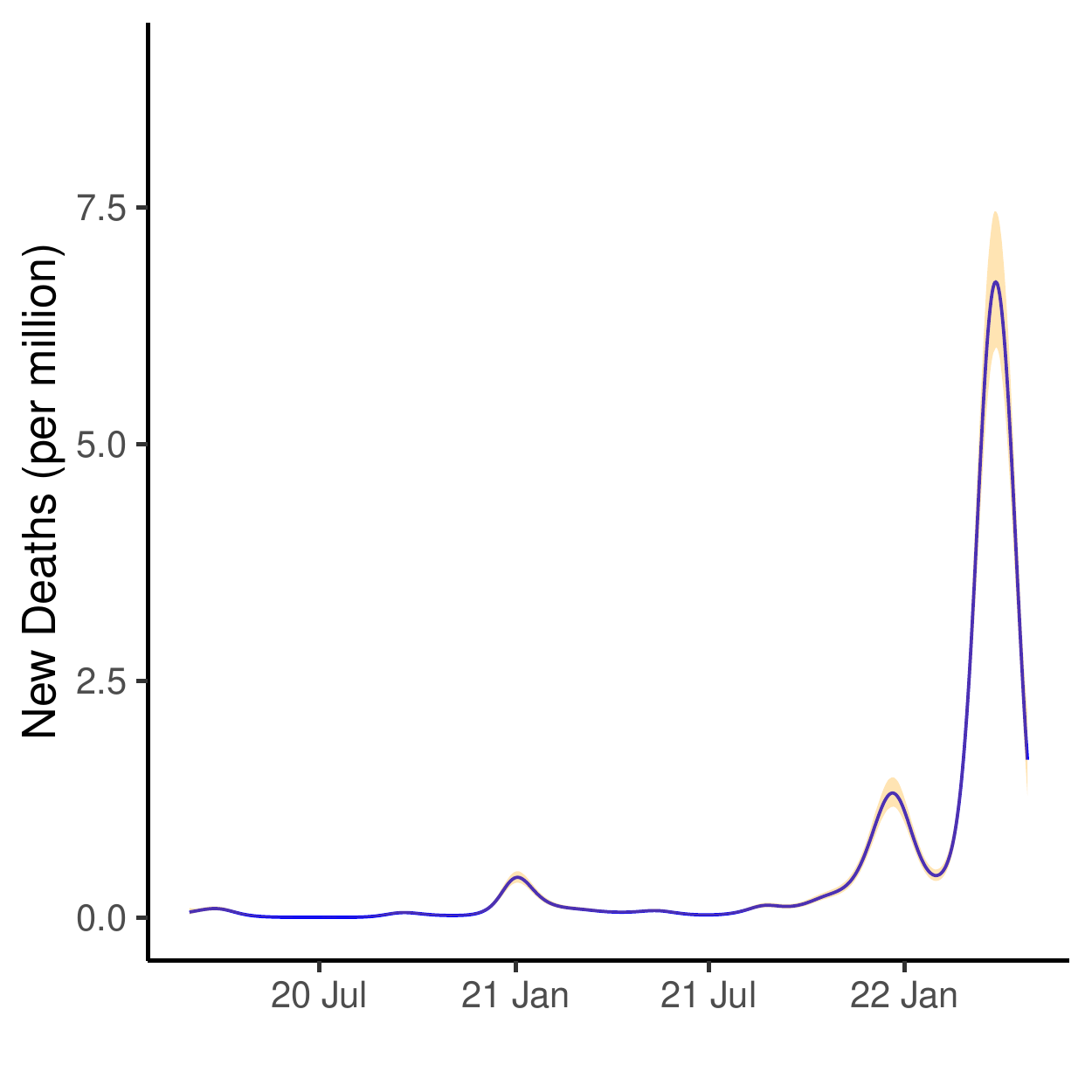}
    }
     \caption{Results for the COVID data analysis in \cref{sec:examples}. In figures (a)-(d), the blue line is the posterior mean of $\exp[g(x)]$ the O-spline (OS) method; and the orange range is the 95 \% pointwise posterior credible interval of the OS method. The death counts were adjusted based on the population size (per million) of each country at 2020. }
    \label{fig:covidDeath}
\end{figure}

\begin{figure}[!p]
    \centering
             \subfigure[Canada]{
      \includegraphics[width=0.41\textwidth]{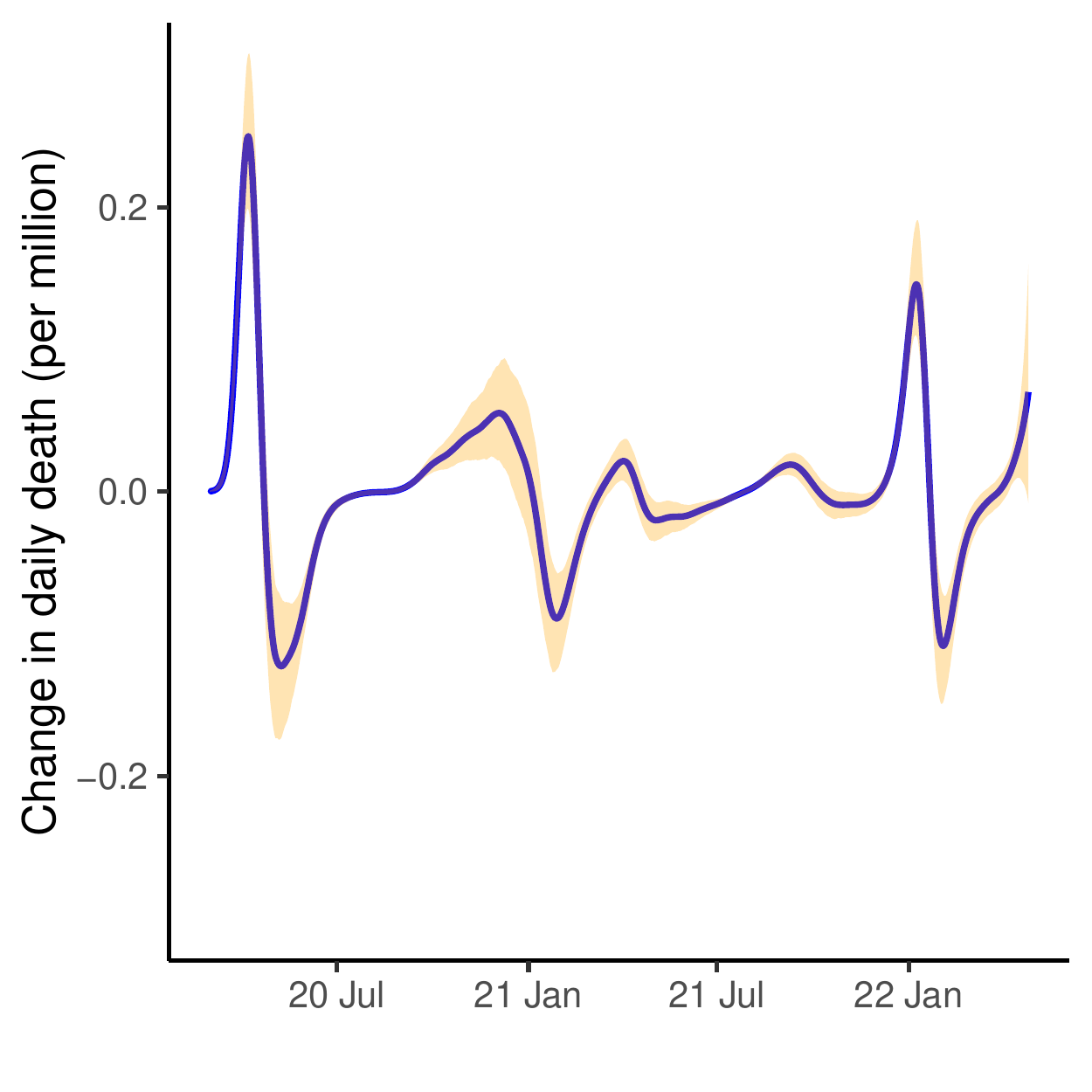}
    }
             \subfigure[Denmark]{
      \includegraphics[width=0.41\textwidth]{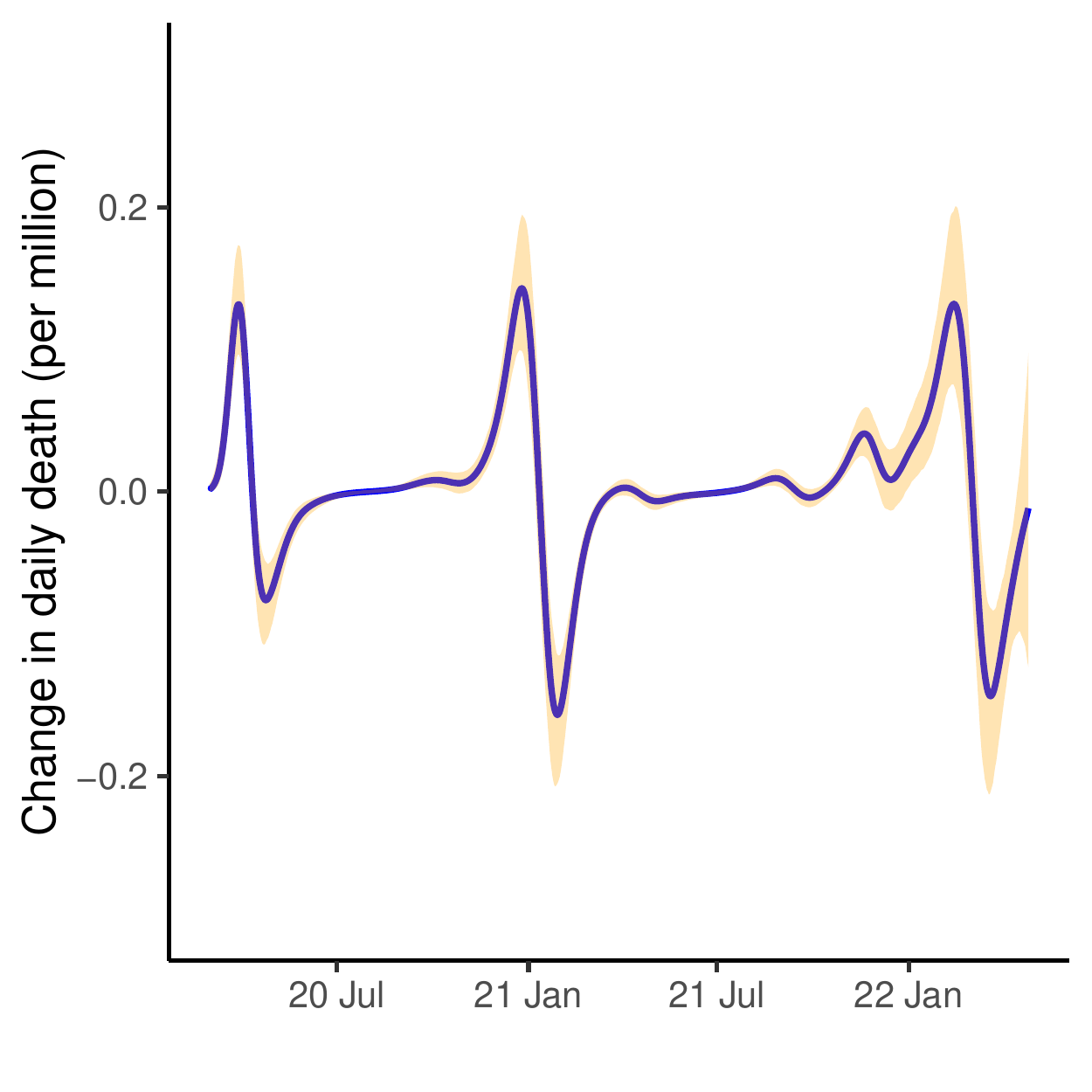}
    }
              \subfigure[South Africa]{
      \includegraphics[width=0.41\textwidth]{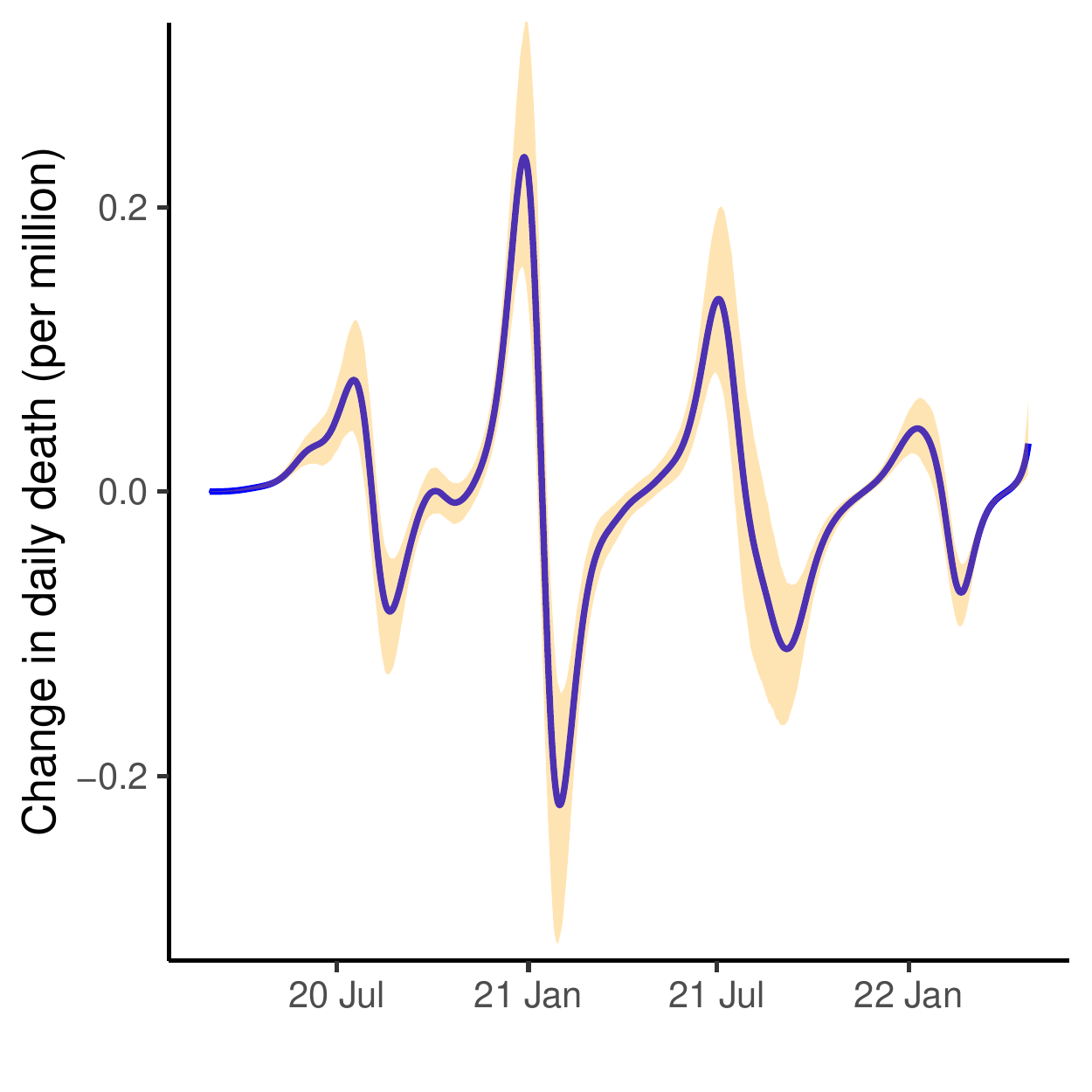}
    }
              \subfigure[South Korea]{
      \includegraphics[width=0.41\textwidth]{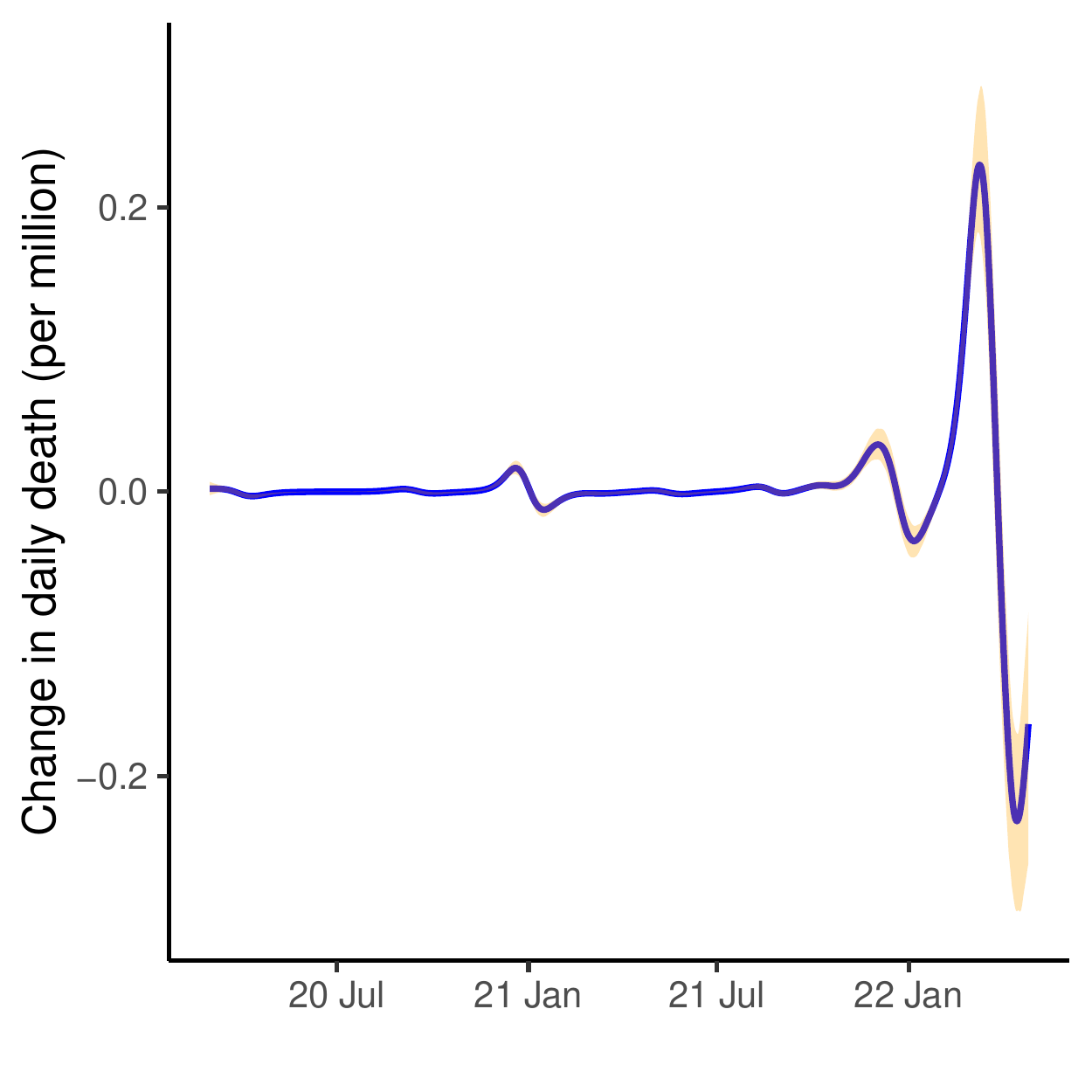}
    }
     \caption{Results for the COVID data analysis in \cref{sec:examples}. In figures (a)-(d) the posterior mean of the derivative $g'(x) \exp [g(x)]$ using the OS method is shown in blue; the orange range is the 95 \% pointwise posterior credible interval of the OS method. The results were adjusted based on the population size (per million) of each country at 2020.}
    \label{fig:covidDeriv}
\end{figure}

\begin{figure}[!p]
    \centering
                 \subfigure[Weekday Effect]{
      \includegraphics[width=0.41\textwidth]{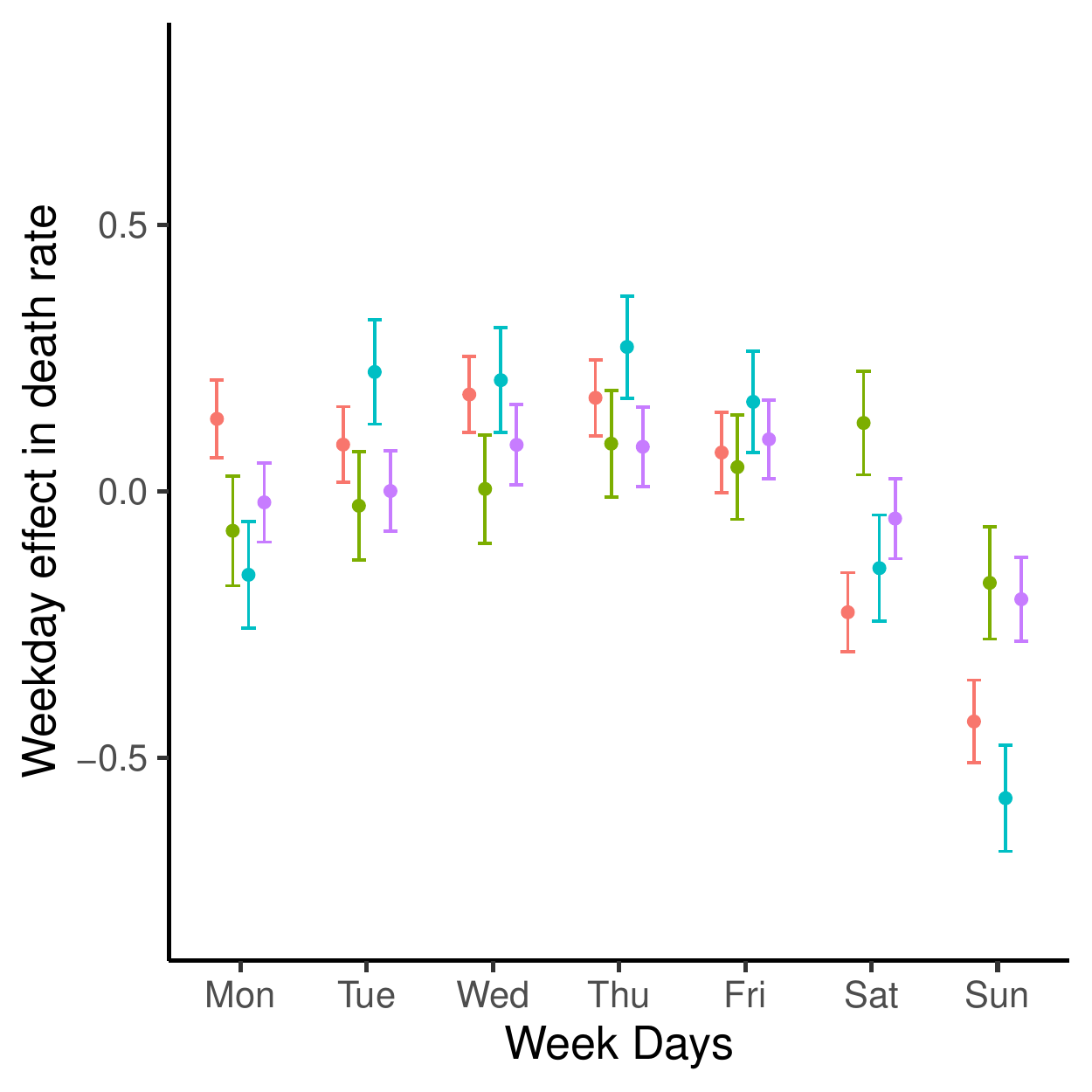}
    }
             \subfigure[7-days PSD]{
      \includegraphics[width=0.41\textwidth]{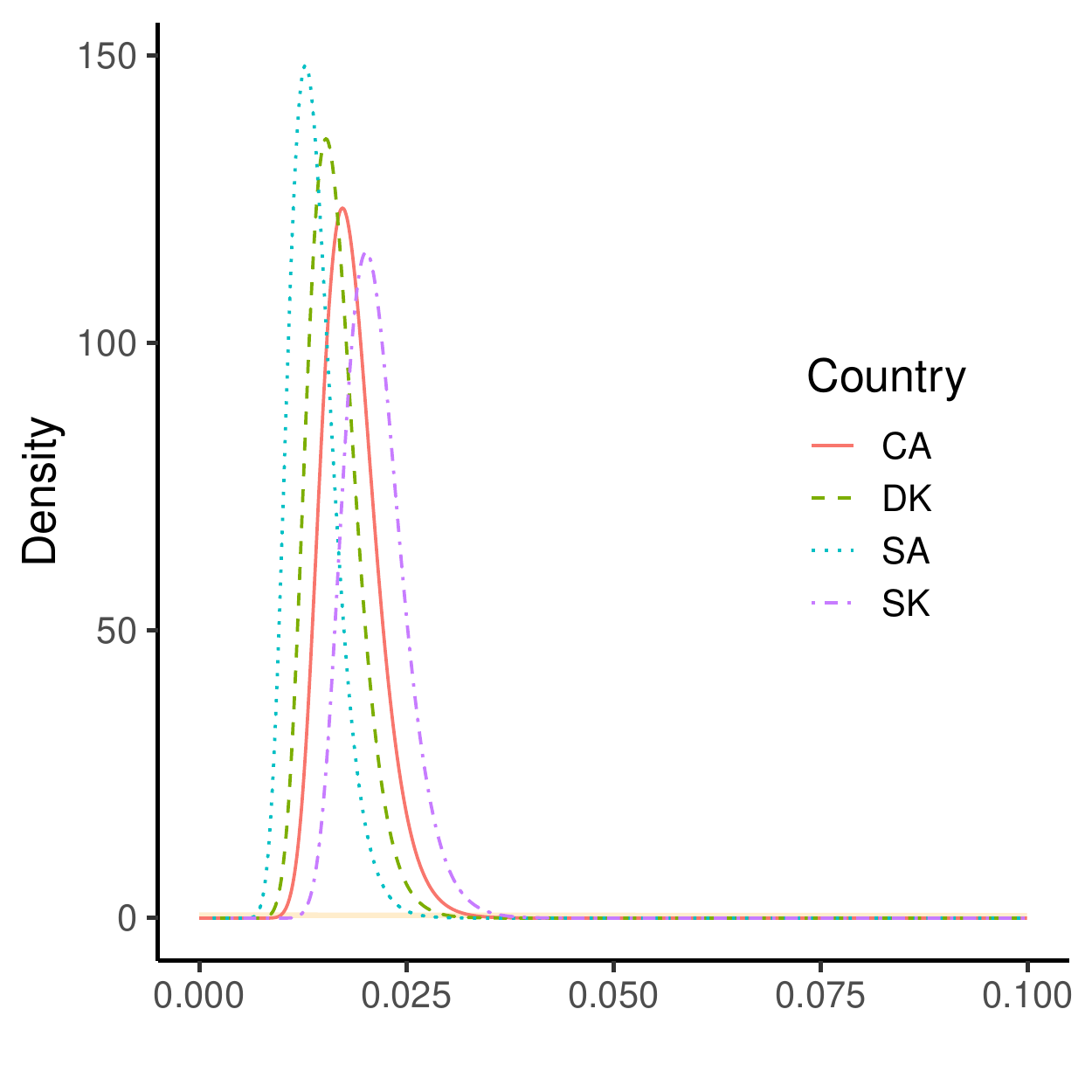}
    }
             \subfigure[Overdispersion $\phi$]{
      \includegraphics[width=0.41\textwidth]{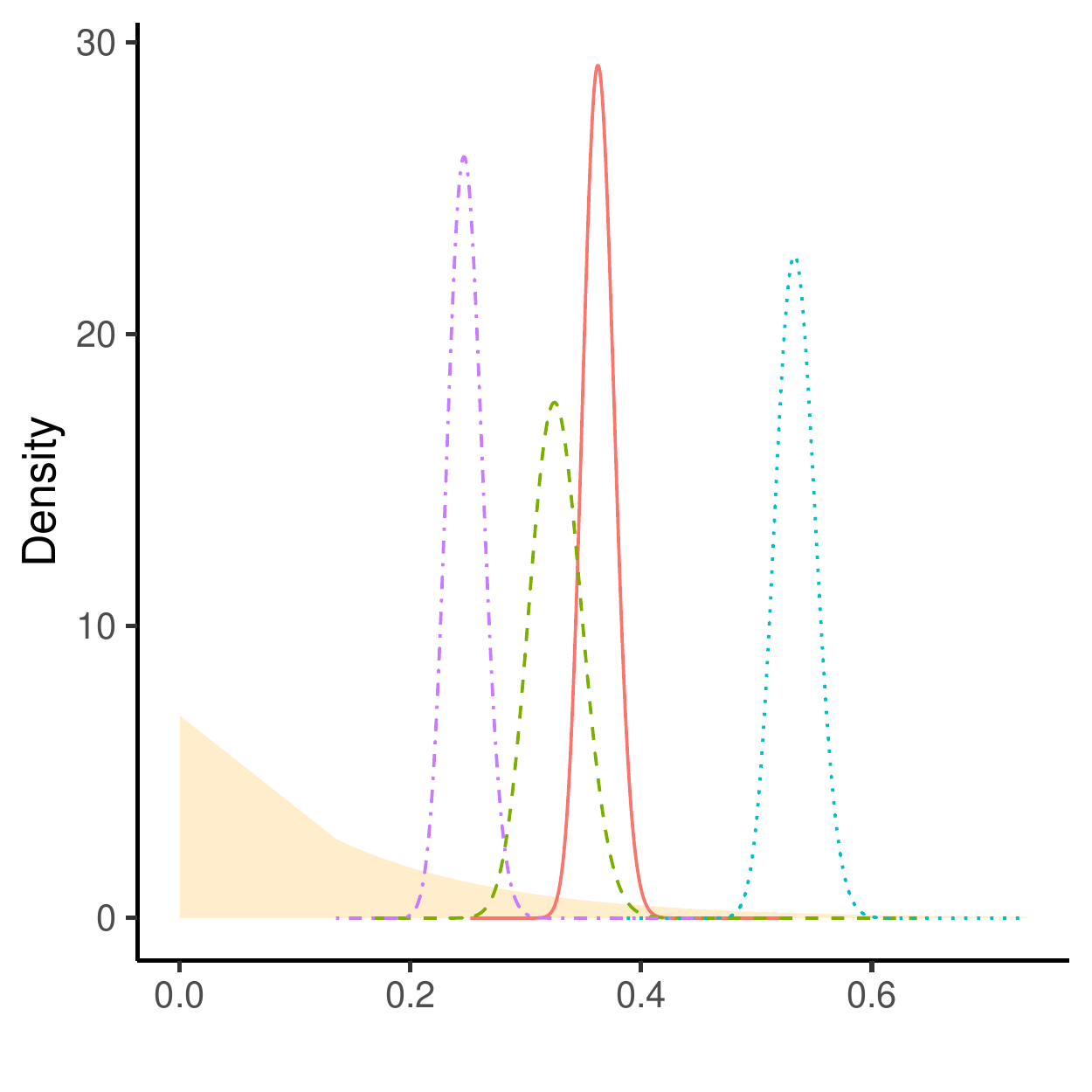}
    }
     \caption{Results for the COVID data analysis in \cref{sec:examples}. Figure (a) shows the inference of weekday effects in each country. The solid dots denote the posterior mean of each $\beta_i$, the weekday effect relative to the average effect. The lines denote the 95 \% credible interval. Figure (b) shows the posterior density of the 7-days PSD $\sigma(7)$ in each country, and (c) shows the posterior density of the overdispersion parameter $\phi$ in each country. For both figures, the orange shaded area represents the prior distribution.}
    \label{fig:covidOther}
\end{figure}


\section{Discussion}\label{sec:discussion}

In this paper, we considered model-based smoothing method with Integrated Wiener's process (IWP), and provided a novel finite dimensional O-spline approximation that works for any order of IWP. 
The proposed approximation is able to give consistent inference for all the derivatives of the function, and we prove its convergence to the true IWP both theoretically and practically with the simulation study. 
To select and interpret prior for the parameter $\sigma$, we introduced a prior elicitation approach based on the notion of $h$-units PSD $\sigma(h)$, which unlike the existing approach based on the marginal SD of the approximation \citep{scalingigmrf}, is defined based on the exact process and hence invariant to the choice of knots.
The utility of the method has been illustrated both by the simulation and the data analysis example with COVID death rates in four different countries.

The order of the IWP$_p(\sigma)$ model implicitly states a priori knowledge one has on the function's smoothness, measured by the number of times the unknown function can be continuously differentiated. If the choice of the order $p$ is not clear by the priori knowledge, one can consider selecting it using methods such as the Bayes factor since marginal likelihood is well-defined within the proposed approximation.

Although the proposed method is already flexible in terms of the supported IWP order and the inference for the derivative, it can be further improved and generalized to allow adaptive smoothing with a varying smoothing parameter $\sigma$, see \cite{adaptivesmoothingsplines} for an example. 

For the simplicity of our presentation, we assumed by default that the IWP starts at the leftmost point of the region of interest. However, the same procedure can still be applied to other starting values using two independent IWP moving toward different directions. This could further simplify the computation in certain scenarios by introducing a more sparse design matrix.

Another potential direction for improvement is for multivariate smoothing such as the spatial setting studied in \cite{spde}, where the generalization of the proposed O-spline basis from univariate space will not be trivial. We leave these to the future work.

\section*{Disclosure statement}

The authors report there are no competing interests to declare.
\section*{Supplemental Materials}

The proofs of \cref{lem:convergence} and \cref{thrm:convergence-joint} are provided in the Appendix. The details of our FEM procedure and additional figures for \cref{sec:examples} are shown in the online supplement.

The codes to reproduce all the results in the main paper are provided at the online repository \href{https://github.com/AgueroZZ/Smooth_IWP_code}{github.com/AgueroZZ/Smooth_IWP_code}.

\bigskip

\bibliographystyle{chicago}

\bibliography{bibliography}

\newpage

\appendix

\section*{Appendix: Proof Of Main Results}

\begin{manuallemma}{1}[Convergence of O-spline Approximation]
Let $\Omega = [a,b]$ where $a,b \in \mathbb{R}^+$ and let $g \sim \text{IWP}_p(\sigma)$ with $p \in \mathbb{Z}^+$. Assume the knots $\{s_1, ..., s_k\}$ are equally spaced over $\Omega$ for each $k \in \mathbb{N}$, and $\tilde{g}_k$ denotes the corresponding $p$th order (O-spline) approximation defined as in \cref{equ:FEM-approxi}, then:
$$||\C - \C_{k}||_{\infty} = O(1/k),$$
where $\C(s,t) = \Cov[g(s),g(t)]$, $\C_{k}(s,t) = \Cov[\tilde{g}_k(s),\tilde{g}_k(t)]$ and $||\C - \C_{k}||_{\infty} \equiv \sup_{s,t \in \Omega}|\C(s,t) - \C_{k}(s,t)|$.
\end{manuallemma}

\begin{proof}
To prove this theorem to general integer order $p$, we will start with the case when $p = 1$. Assume without the loss of generality that $\gamma_i = 0$ for each $i$ in \cref{eqn:expansion}, and $\sigma = 1$. 
Also, assume without the loss of generality that $\Omega = [0,1]$ and hence $s_i = i/k$ for each i. Then the true covariance function of IWP in this case will be $$\C(s,t) = \min\{s,t\}.$$

To compute the covariance of the approximation, we have the following:
\begin{equation}
    \begin{aligned}
    \C_{k}(s,t) &:= \Cov \bigg(\tilde{g}_k(s), \tilde{g}_k(t) \bigg) \\
                 &= \Cov \bigg(\sum_{i=1}^k w_i \varphi_i(s), \sum_{i=1}^k w_i \varphi_i(t) \bigg) \\
                 &= k \Phi(s)^T \Phi(t),\\
    \end{aligned}
\end{equation}
where $\Phi(t) = (\varphi_1(t), ..., \varphi_k(t))^T$ and $\Phi(s) = (\varphi_1(s), ..., \varphi_k(s))^T$, because of the distribution of $w_i \overset{iid}{\sim} N(0,k)$ according to our definition.

Now assume that $s \leq t$ without loss of generality. Because of the overlapping property of the O spline basis $\varphi_i$, it is obvious that only knots located in the region $[0,s]$ will have basis functions with non-trivial contributions to the above inner product. Therefore,
\begin{equation}
    \begin{aligned}
   k \Phi(s)^T \Phi(t) &= k\sum_{i=1}^{\lfloor ks \rfloor} \varphi_i(s)\varphi_i(t) \\
   &= k \sum_{i=1}^{\lfloor ks \rfloor} \frac{1}{k^2} \ \text{since $p$ = 1} \\
   &= \lfloor ks \rfloor/k,
    \end{aligned}
\end{equation}
since $\phi_i(s) = \phi_i(t) = \frac{1}{k}$ for $i \leq \lfloor ks \rfloor$.

\ 

\noindent For any fixed $t \in \Omega$, we have 
\begin{equation}
    \begin{aligned}
   \sup_{s \in \Omega} |\C(s,t) - \C_{k}(s,t)| &\leq \sup_{s \in [0,t]} |\C(s,t) - \C_{k}(s,t)| + \sup_{s \in [t,1]} |\C(s,t) - \C_{k}(s,t)|\\
   &= \sup_{s \in [0,t]} |s - \lfloor ks \rfloor/k| + \sup_{s \in [t,1]} |t - \lfloor kt \rfloor/k| \\
   &\leq 2/k.
    \end{aligned}
\end{equation}
Since $t$ is arbitrary, this implies $\sup_{s,t \in \Omega}|\C(s,t) - \C_{k}(s,t)| \leq 2/k$ and hence we prove the case for $p=1$.

To generalize the result to the higher order of $p$, we need the following proposition about the Gaussian process:

\begin{proposition}[Integration of Gaussian Process]\label{prop:GPderiv}
Assume $f(t)$ is a Gaussian process with continuous sample paths for $t \in \Omega := [0,1]$ with the boundary condition $f(0) = 0$, then $g(s) := \int_0^s f(t) dt$ is still a Gaussian process, and its covariance function at $s^*, t^* \in \Omega$ can be computed as $$\C_g(s^*, t^*) = \int_{0}^{s^*} \int_{0}^{t^*} \C_f(s, t) dt ds$$ where $\C_f(.)$ is the covariance function of $f$.
\end{proposition}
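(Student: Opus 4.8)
Two things must be established: that $g$ is again a Gaussian process, and that its covariance function is the iterated integral of $\C_f$. The plan is to realize $g(s)=\int_0^s f(t)\,dt$ as an almost-sure limit of Riemann sums, each of which is a finite linear combination of values of $f$ — hence jointly Gaussian across any finite set of arguments — and then to pass to the limit, keeping control of first and second moments as we do so.

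I would begin with the preliminary observation that sample-path continuity of a \emph{Gaussian} process automatically upgrades to $L^2$-continuity: if $t_n\to t$ in $[0,1]$ then $f(t_n)\to f(t)$ almost surely, hence in distribution, and since each increment $f(t_n)-f(t)$ is Gaussian this forces its mean and variance to zero, giving convergence in $L^2$ (indeed in every $L^p$). Consequently $\C_f$ is continuous on the compact square $[0,1]^2$, hence bounded there; this both guarantees that $g(s)$ is well defined pathwise (an integral of a continuous function over a bounded interval) and makes the Fubini step below legitimate.

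For the first claim, fix $s_1,\dots,s_m\in\Omega$ and a sequence of partitions of $[0,1]$ of mesh tending to $0$, refined so that every $s_j$ is a node; writing $R_n(s_j)$ for the induced Riemann sum of $f$ over $[0,s_j]$, the vector $\bigl(R_n(s_1),\dots,R_n(s_m)\bigr)$ is a linear function of the Gaussian vector of values of $f$ at the partition nodes, hence Gaussian, and it converges almost surely to $\bigl(g(s_1),\dots,g(s_m)\bigr)$ because each sample path is uniformly continuous, hence Riemann integrable. An almost-sure limit of Gaussian vectors is Gaussian (by the Cram\'er--Wold device this reduces to the one-dimensional fact that an a.s.\ limit of Gaussians is Gaussian), so $g$ has Gaussian finite-dimensional distributions. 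For the second claim I would reduce to the centered case — subtracting the deterministic mean $\E f(\cdot)$ changes neither $\C_f$ nor $\C_g$ — so that $\C_g(s^*,t^*)=\E\bigl[g(s^*)g(t^*)\bigr]=\E\int_0^{s^*}\!\int_0^{t^*} f(s)f(t)\,dt\,ds$, and then invoke Fubini's theorem to move the expectation inside; this is permitted because $\E|f(s)f(t)|\le\tfrac12\bigl(\C_f(s,s)+\C_f(t,t)\bigr)$ is bounded on $[0,1]^2$ by the preliminary observation, yielding $\C_g(s^*,t^*)=\int_0^{s^*}\!\int_0^{t^*}\C_f(s,t)\,dt\,ds$.

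The only genuinely delicate points are the two limit interchanges — that the a.s.\ limit of the Gaussian Riemann-sum vectors is again Gaussian, and that expectation commutes with the double integral — and both rest on the single ingredient that continuity of the sample paths of a Gaussian process forces continuity, hence local boundedness, of its covariance function; I expect this to be the main obstacle and would isolate it as a short sub-lemma, after which everything else is routine. Note finally that the hypothesis $f(0)=0$ is not used in proving this proposition; it is retained only because that is the boundary condition supplied by the IWP recursion when \cref{prop:GPderiv} is applied within the proof of \cref{lem:convergence}.
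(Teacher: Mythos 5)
Your proposal is correct, but it is worth noting that the paper does not actually prove this proposition at all: it simply cites Section 2.4.3 of Adler and Taylor (2010) and moves on. What you have written is the standard self-contained argument that the citation elides, and it is sound: realizing $g$ as an a.s.\ limit of Riemann-sum vectors (each a linear image of a Gaussian vector) gives Gaussianity of the finite-dimensional distributions, and the Fubini interchange gives the covariance formula. Your key structural insight --- that both limit interchanges hinge on the single fact that sample-path continuity of a Gaussian process forces $L^2$-continuity and hence continuity and boundedness of $\C_f$ on the compact square --- is exactly the right thing to isolate, and your observation that the hypothesis $f(0)=0$ is never used is accurate (it matters only for how the proposition is deployed inside the induction in \cref{lem:convergence}). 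Two small points you could tighten: pathwise well-definedness of $g(s)$ already follows from sample-path continuity alone and does not need the covariance bound, and the Fubini step tacitly requires joint measurability of $(s,t,\omega)\mapsto f(s,\omega)f(t,\omega)$, which you should note follows from path continuity making the process jointly measurable. Neither is a gap in substance. The trade-off between the two routes is the obvious one: the paper's citation is economical but leaves the reader to verify that the referenced result covers the non-centered, merely path-continuous setting used here, whereas your argument is longer but makes every hypothesis earn its keep.
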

\noindent This proposition follows from section 2.4.3 of \cite{adler2010geometry}. 

Now, consider the case where $p=2$, then because of the derivative consistency property of the overlapping spline basis, it is immediate that $\tilde{g}^{(1)}_k(x)$ is the first order O-spline approximation for the first order IWP ${g}^{(1)}(x)$, whose convergence is already established in the proof above.
Furthermore, for $p=1$, the proposed approximation has continuous sample path and covariance function $\C_{k}(s,t) = \lfloor ks \rfloor/k \leq 1$ as shown above.

Let $\C^{[p]}(s,t)$ and $\C^{[p]}_k(s,t)$ denote the covariance of the $p$th order IWP and its approximation, then for any choice of $s,t \in \Omega$, we have:
\begin{equation}
    \begin{aligned}
   \sup_{s,t \in \Omega} &|\C^{[2]}(s,t) - \C^{[2]}_{k}(s,t)|\\
   &= \sup_{s,t \in \Omega} \bigg|\int_{0}^{s} \int_{0}^{t} \C^{[1]}(s^*,t^*) - \C^{[1]}_{k}(s^*,t^*) dt^* ds^*\bigg| \\
   &\leq \sup_{s,t \in \Omega} \int_{0}^{s} \int_{0}^{t} \bigg| \C^{[1]}(s^*,t^*) - \C^{[1]}_{k}(s^*,t^*) \bigg| dt^* ds^* \\
   &\leq \sup_{s,t \in \Omega} \int_{0}^{1} \int_{0}^{1} \bigg| \C^{[1]}(s^*,t^*) - \C^{[1]}_{k}(s^*,t^*) \bigg| dt^* ds^* \\
   & \leq \int_{0}^{1} \int_{0}^{1} \frac{2}{k} dt^* ds^* \ \text{by previous result on $p=1$} \\
   &= \frac{2}{k}.
    \end{aligned}
\end{equation}
So the convergence of the covariance function for $p=2$ is established. Furthermore, because the region $[0,s] \times [0,t] \subseteq \Omega \times \Omega$ is compact, the covariance function for higher order $p$ will still be bounded by $2/k$. Hence the convergence result can be generalized to any positive integer $p$ by induction.

\end{proof}

\begin{manualtheorem}{1}[Main Theorem]
Given the same setting and notations as in \cref{lem:convergence},  for any non-negative integers $q_1 \leq p-1$ and $q_2 \leq p-1$:
$$||\C^{(q_1,q_2)} - \C^{(q_1,q_2)}_{k}||_{\infty} = O(1/k),$$
where $\C^{(q_1,q_2)}(s,t) = \Cov[g^{(q_1)}(s), g^{(q_2)}(t)]$ and $\C^{(q_1,q_2)}_{k}(s,t) = \Cov[\tilde{g}^{(q_1)}_k(s), \tilde{g}^{(q_2)}_k(t)]$.
\end{manualtheorem}

\begin{proof}

To prove the general convergence result from \cref{thrm:convergence-joint}, we start with proving a special case given as the following \cref{lem:cross-convergence}:

\begin{manuallemma}{2}[Convergence of Cross-Covariance]\label{lem:cross-convergence}
Let $\Omega = [a,b]$ where $a,b \in \mathbb{R}^+$ and let $p \geq 1$ and $q \leq p-1$ be arbitrary positive integers. Let $g \sim \text{IWP}_p(\sigma)$. Assume the knots $\{s_1, ..., s_k\}$ are equally spaced over $\Omega$ for each $k \in \mathbb{N}$, and $\tilde{g}_k(x)$ denotes the corresponding $p$th order (O-spline) approximation defined in \cref{equ:FEM-approxi}, then:
$$||\C^{(0,q)} - \C^{(0,q)}_{k}||_{\infty} = O(1/k),$$
where $\C^{(0,q)}(s,t) = \Cov[g(s), g^{(q)}(t)]$ and $\C^{(0,q)}_{k}(s,t) = \Cov[\tilde{g}_k(s), \tilde{g}_k^{(q)}(t)]$.
\end{manuallemma}

For ease of notation, define the differentiation and integration operators as $$D^p_t (g(s,t)) := \frac{\partial^p g}{\partial t^p} (s,t),$$ and $$I^p_t(g(s,t)) := \int_{0}^{t} \int_{0}^{t_1} ... \int_{0}^{t_{p-1}} g(s,t_{p}) dt_{p} dt_{p-1} ... dt_1.$$
Both operators are linear. Again for simplicity, we consider $\Omega = [0,1]$ without the loss of generality. When $p=1$ the theorem is trivial, so we consider the case where $p>1$.

Let $s,t \in \Omega$ be fixed and let $g$ follows the $p$th order IWP with $\tilde{g}_k$ denotes its approximation. Let $\C^{[p]}(s,t)$ denotes its auto-covariance and $\C^{[p]}_k(s,t)$ denotes the auto-covariance of the approximation.

Following from the proof of \cref{lem:convergence}, $\{\C^{[p]}_k(s,t)\}_k$ are uniformly bounded. Applying the Fubini's theorem with result of \cref{prop:GPderiv}, we get:
\begin{equation}
    \begin{aligned}
       \C^{(0,q)}_k(s,t) &= \mathbb{E}[\tilde{g}_k(s) \frac{\partial^q}{\partial t^q} \tilde{g}_k(t)]\\
       &=  \frac{\partial^q}{\partial t^q} \mathbb{E}[\tilde{g}_k(s) \tilde{g}_k(t)] \\
       &= D^q_t \C^{[p]}_k(s,t) \\
       &= D^q_t I^{p-1}_t I^{p-1}_s [\C_k^{[1]}(s,t)] \\
       &= I^{p-q-1}_t I^{p-1}_s [\C_k^{[1]}(s,t)].
    \end{aligned}
\end{equation}
Similarly, $\C^{(0,q)}(s,t)$ can be written as:
\begin{equation}
    \begin{aligned}
       \C^{(0,q)}(s,t) &= D^q_t \C^{[q]}(s,t) \\
       &=  I^{p-q-1}_t I^{p-1}_s [\C^{[1]}(s,t)].
    \end{aligned}
\end{equation}
Then the following upper bound can be achieved with the result of \cref{lem:convergence}:
\begin{equation}
    \begin{aligned}
       \sup_{s,t \in \Omega} |\C^{(0,q)}(s,t) - \C^{(0,q)}_k(s,t)| &= \sup_{s,t \in \Omega} |I^{p-q-1}_t I^{p-1}_s \C^{[1]}(s,t) - \C_k^{[1]}(s,t)| \\
       &\leq \sup_{s,t \in \Omega} I^{p-q-1}_t I^{p-1}_s |\C^{[1]}(s,t) - \C_k^{[1]}(s,t)| \\
       &\leq  \sup_{s,t \in \Omega} I^{p-q-1}_t I^{p-1}_s \frac{2}{k} \\
       &\leq \frac{2}{k},
    \end{aligned}
\end{equation}
where the last inequality follows from the compactness of $\Omega$. This concludes the proof of \cref{lem:cross-convergence}.

With \cref{lem:cross-convergence} and the property of the O-spline approximation in \cref{equ:sim-property-OS}, \cref{thrm:convergence-joint} can then be established.

\end{proof}

\newpage

\end{document}